\newcounter{savenumi}
\newtheorem{theoremfoo}{Theorem}
\newenvironment{theorem}{\pagebreak[1]\begin{theoremfoo}}{\end{theoremfoo}}
\newtheorem{propositionfoo}[theoremfoo]{Proposition}
\newtheorem{lemmafoo}[theoremfoo]{Lemma}
\newenvironment{lemma}{\pagebreak[1]\begin{lemmafoo}}{\end{lemmafoo}}
\newtheorem{conjecturefoo}[theoremfoo]{Conjecture}
\newtheorem{corollaryfoo}[theoremfoo]{Corollary}
\newenvironment{corollary}{\pagebreak[1]\begin{corollaryfoo}}{\end{corollaryfoo}}
\newtheorem{exercisefoo}{Exercise}
\newtheorem{openfoo}[theoremfoo]{Question}
\newtheorem{nttn}[theoremfoo]{Notation}
\newtheorem{dfntn}[theoremfoo]{Definition}
\newenvironment{definition}{\pagebreak[1]\begin{dfntn}\rm}{\end{dfntn}}
\newenvironment{proof}
    {\pagebreak[1]{\narrower\noindent {\bf Proof:\quad\nopagebreak}}}{\QED}
\newcommand{\floor}[1]{\left\lfloor#1\right\rfloor}
\newcommand{\ceiling}[1]{\left\lceil#1\right\rceil}
\def\nre.{$n$\/-r.e.}
\newtheorem{factfoo}[theoremfoo]{Fact}
\newtheorem{propertyfoo}[theoremfoo]{Property}
\def\@makechapterhead#1{ \vspace*{50pt} { \parindent 0pt \raggedright
 \ifnum \c@secnumdepth >\m@ne \huge\bf \@chapapp{} \thechapter. \par
 \vskip 20pt \fi \Huge \bf #1\par
 \nobreak \vskip 40pt } }
\def\@sect#1#2#3#4#5#6[#7]#8{\ifnum #2>\c@secnumdepth
     \def\@svsec{}\else
     \refstepcounter{#1}\edef\@svsec{\csname the#1\endcsname.\hskip 1em }\fi
     \@tempskipa #5\relax
      \ifdim \@tempskipa>\z@
        \begingroup #6\relax
          \@hangfrom{\hskip #3\relax\@svsec}{\interlinepenalty \@M #8\par}
        \endgroup
       \csname #1mark\endcsname{#7}\addcontentsline
         {toc}{#1}{\ifnum #2>\c@secnumdepth \else
                      \protect\numberline{\csname the#1\endcsname}\fi
                    #7}\else
        \def\@svsechd{#6\hskip #3\@svsec #8\csname #1mark\endcsname
                      {#7}\addcontentsline
                           {toc}{#1}{\ifnum #2>\c@secnumdepth \else
                             \protect\numberline{\csname the#1\endcsname}\fi
                       #7}}\fi
     \@xsect{#5}}
\def\@begintheorem#1#2{\it \trivlist \item[\hskip \labelsep{\bf #1\ #2.}]}
\def\@opargbegintheorem#1#2#3{\it \trivlist
      \item[\hskip \labelsep{\bf #1\ #2\ (#3).}]}
\newif\ifshortconferences
\newif\ifmediumconferences
\def\ending#1{{\count1=#1\relax
\count2=\count1
\divide\count2 by 100
\multiply\count2 by 100
\advance\count1 by -\count2
\ifnum\count1=11
th%
\else \ifnum\count1=12
th%
\else \ifnum\count1=13
th%
\else
\count2=\count1
\divide\count1 by 10
\multiply\count1 by 10
\advance\count2 by -\count1
\ifnum\count2=1
st%
\else \ifnum\count2=2
nd%
\else \ifnum\count2=3
rd%
\else th%
\fi\fi\fi\fi\fi\fi
}}
\def\Proceedingsofthe{\ifshortconferences Proc.\else\ifmediumconferences Proc.\else Proceedings of the\fi\fi}
\newcounter{confnum}
\def\conf#1#2{%
\setcounter{confnum}{#2}%
\addtocounter{confnum}{-\csname #1zero\endcsname}%
\ifnum\value{confnum}=1%
\expandafter\ifx\csname #1One\endcsname\relax%
\Proceedingsofthe\ \arabic{confnum}\ending{\value{confnum}}\ \csname #1name\endcsname%
\else \csname #1One\endcsname\fi%
\else%
\Proceedingsofthe\
\arabic{confnum}\ending{\value{confnum}}\ \csname #1name\endcsname\fi}
\def\qsym{\vrule width0.7ex height0.9em depth0ex}
\newif\ifqed\qedtrue
\def\noqed{\global\qedfalse}
\def\qed{\ifqed{\penalty1000\unskip\nobreak\hfil\penalty50
\hskip2em\hbox{}\nobreak\hfil\qsym
\parfillskip=0pt \finalhyphendemerits=0\par\medskip}\fi\global\qedtrue}
\def\eqnqed{\noqed
    \def\@tempa{equation}
    \ifx\@tempa\@currenvir\def\@eqnnum{\qsym}%
    \addtocounter{equation}{-1}\else%
    \def\@@eqncr{\let\@tempa\relax
    \ifcase\@eqcnt \def\@tempa{& & &}\or \def\@tempa{& &}%
      \else \def\@tempa{&}\fi
     \@tempa {\def\@eqnnum{{\qsym}}\@eqnnum}
     \global\@eqnswtrue\global\@eqcnt\z@\cr}\fi}
\def\eqnlabel#1#2{\if@filesw {\let\thepage\relax%
   \def\protect{\noexpand\noexpand\noexpand}%
   \edef\@tempa{\write\@auxout{\string
      \newlabel{#2}{{{#1}}{\thepage}}}}%
   \expandafter}\@tempa%
   \if@nobreak \ifvmode\nobreak\fi\fi\fi%
    \def\@tempa{equation}
    \ifx\@tempa\@currenvir\def\theequation{{#1}}%
    \addtocounter{equation}{-1}\else%
    \def\@@eqncr{\let\@tempa\relax
    \ifcase\@eqcnt \def\@tempa{& & &}\or \def\@tempa{& &}%
      \else \def\@tempa{&}\fi
     \@tempa {\def\@eqnnum{{#1}}\@eqnnum}
     \global\@eqnswtrue\global\@eqcnt\z@\cr}\fi}
\def\QED{\qed}
\newcommand{\bigO}{{\rm O}}
\newcounter{example}[section]
\begin{document}

\date{}

\title{Polyhedra Circuits and Their Applications
\thanks{This research is supported in part by National Science
Foundation Early Career Award 0845376 and Bensten Fellowship of the
University of Texas - Rio Grande Valley.} }

	\author{
	Bin Fu$^1$, Pengfei Gu$^1$, and Yuming Zhao$^2$
	\\\\
	$^1$Department of Computer Science\\
	University of Texas - Rio Grande Valley, Edinburg, TX 78539, USA\\
	\\ 
	$^2$School of Computer Science\\
	Zhaoqing University, Zhaoqing, Guangdong 526061, P.R. China\\
} \maketitle

\centerline {Preliminary Version}

\begin{abstract}
     We introduce polyhedra circuits. Each polyhedra circuit characterizes a geometric region in $\mathbb{R}^d$. They can be applied to represent a rich class of geometric objects, which include all polyhedra and the union of a finite number of polyhedra.  They can be used to approximate a large class of $d$-dimensional manifolds in $\mathbb{R}^d$. Barvinok~\cite{Barvinok93} developed polynomial time algorithms to compute the volume of a rational polyhedra, and to count the number of lattice points in a rational polyhedra in a fixed dimensional space $\mathbb{R}^d$ with a fix $d$. Define $T_V(d,\, n)$ be the polynomial time in $n$ to compute the volume of one rational polyhedra, $T_L(d,\, n)$ be the polynomial time in $n$ to count the number of lattice points in one rational polyhedra with $d$ be a fixed dimensional number, $T_I(d,\, n)$ be the polynomial time in $n$ to solve integer linear programming time with $d$ be the fixed dimensional number, where $n$ is the total number of linear inequalities from input polyhedra. We develop algorithms to count the number of lattice points in the geometric region determined by a polyhedra circuit in $\bigO\left(nd\cdot r_d(n)\cdot T_V(d,\, n)\right)$ time and to compute the volume of the geometric region determined by a polyhedra circuit in $\bigO\left(n\cdot r_d(n)\cdot T_I(d,\, n)+r_d(n)T_L(d,\, n)\right)$ time, where $n$ is the number of input linear inequalities, $d$ is number of variables and $r_d(n)$ be the maximal number of regions that $n$ linear inequalities with $d$ variables partition $\mathbb{R}^d$.
     The applications to continuous polyhedra maximum coverage problem, polyhedra maximum lattice coverage problem, polyhedra $\left(1-\beta\right)$-lattice set cover problem, and $\left(1-\beta\right)$-continuous polyhedra set cover problem are discussed. We also show the NP-hardness of the continous maximum coverage problem and set cover problem when each set is represented as union of polyhedra.
     
	 $\textbf{Keywords:}$ Lattice points, Volume, Polyhedra circuits, Union
\end{abstract}

\section{Introduction}
Polyhedra is an important topic in mathematics, and has close
connection to theoretical computer science. There are two natural topics in polyhedra, computing the volume and counting the number of lattice points.  

The problem of counting the number of lattice points in a polyhedra has a wide variety of applicaitons in many areas, for example, number theory,
combinatorics, representation theory, discrete optimization, and
cryptography. It is related to the following problem: given a
polyhedra $P$ which is given by a list of its vertices or by a list
of linear inequalities, our goal is to compute the number $|P\cap
\mathbb{Z}^d|$ of lattice points in $P.$ Researchers have paid much
attention to this problem. Ehrhart~\cite{Ehrhart67} introduced
Ehrhart polynomials that were a higher-dimensional generalization of
Pick's theorem in the Euclidean plane. Dyer~\cite{Dyer91} found
polynomial time algorithms to count the number of lattice points in
polyhedra when dimensional number $d=3,\,4.$
Barvinok~\cite{Barvinok94, Barvinok99} designed a polynomial time
algorithm for counting the number of lattice points in polyhedra when the
dimension is fixed. The main ideas of the algorithm were using
exponential sums~\cite{Brion88, Brion92} and decomposition of
rational cones to primitive cones~\cite{Stanley97} of the polyhedra.
Dyer and Kannan~\cite{Dyer97} simplied Barvinok's polynomial time
algorithm and showed that only very elementary properties of
exponential sums were needed to develop a polynomial time algorithm.
De Loera et al.~\cite{Loera04} described the first implementation of
Barvinok's algorithm called LattE. Some other algebraic-analytic
algorithms have been proposed by many authors (for example,
see~\cite{Baldoni-Silva01, Beck00, Beck03, Lasserre02, MacMahon60,
Pemantle01}.)

Computing exactly the volume of a polyhedra is a basic problem that
have drawn lots of researchers' attentions~\cite{Lasserre83,
Cohen79, Hohenbalken81, Allgower86,Lawrence91,Barvinok93}. It is known that this problem is $\#P$-complete if the polyhedra is given by its vertices or by its facets~\cite{Dyer88, Khachiyan89}. Cohen and
Hickey~\cite{Cohen79} and Von Hohenbalken~\cite{Hohenbalken81}
proposed to compute the volume of a polyhedra by trianguating and
summing the polyhedra. Allgower and Schmidt~\cite{Allgower86}
trianguated the boundary of the polyhedra to compute the volume of
polyhedra. Lasserre~\cite{Lasserre83} developed a recursive method
to compute the volume of polyhedra. Lawrence~\cite{Lawrence91}
computed the volume of polyhedra based on Gram's relation for
polyhedra. Barvinok~\cite{Barvinok93} developed a polynomial-time
algorithm to compute the volume of polyhedra by using the
exponentional integral.

{\bf Motivation:} The existing algorithms related to count the number of lattice points and to compute the volume of the polyhedra concerned with only one polyhedra.  In order to have broader applications, it is essential to develop algorithms to deal with geometric objects that can be generated by a list of polyhedra via unions, intersections, and complementations. In this paper, we propose polyhedra circuits. Polyhedra circuits can be used to represent a large class of geometric objects.  We propose algorithms to compute the volume of the geometric regions by polyhedra circuits, and to count the number of lattice points in the geometric regions by polyhedra circuits. 

{\bf Contributions:}  We have the following contributions to polyhedra. $1.$ We introduce polyhedra circuits. Each polyhedra circuit characterizes a geometric region in $\mathbb{R}^d$. They can be applied to represent a rich class of geometric objects, which include all polyhedra and the union of a finite number of polyhedra.  They can be used to approximate a large class of $d$-dimensional manifolds in $\mathbb{R}^d$. $2.$ We develop an algorithm to compute the volume of the geometric region determined by a polyhedra circuit. We also develop an algorithm to count the number of lattice points of the geometric region determined by a polyhedra circuit.   Our method is based on Barvinok's algorithm, which is only suitable for polyhedra. $3.$ We apply the methods for polyhedra circuits to support new greedy algorithms for the maximum coverage problem and set cover problem, which involve more complex geometric objects. The existing research results about the geometric maximum coverage problem and set cover problem only handle simple objects such as balls and rectangular shapes. All of our algorithms run in polynomial time in $\mathbb{R}^d$ with a fixed $d.$


{\bf Organization:} The rest of paper is organized as follows. In
Section~\ref{pre-sec}, we introduce some important theorems and definitions. In Section~\ref{basic-sec}, we introduce polyhedra circuits and develop polynomial time algorithms to compute the volume of a given list of linear inequalities and to count the number of lattice points in a given list of linear inequalities. In Section~\ref{maxi-cover-app}, we apply the algorithm to polyhedra maximum coverage probelm. In Section~\ref{maxi--lattice-cover-app}, we apply the aglorithm to polyhedra maximum lattice coverage problem. In Section~\ref{set_cover_sec}, we apply the algorithm to polyhedra $\left(1-\beta\right)$-lattice set cover problem. In Section~\ref{conti-set-cover-app}, we apply the algorithm to $\left(1-\beta\right)$-continuous polyhedra set cover problem. Section~\ref{NP-hardness} shows the NP-hardness of the maximum coverage problem and set cover problem when each set is represented as union of polyhedrons. Section~\ref{concl} summarizes the conclusions.

\section{Preliminaries}\label{pre-sec}
In this section, we introduce some definitions and some important theorems that play a basic role in our algorithms.

\begin{definition}[\cite{Barvinok94}]\label{polyhedra-def}
	A {\it rational polyhedra } $P\in \mathbb{R}^d$ is a set defined by finitely many linear inequalities:
	$$P=\bigg\{x: \quad \sum\limits_{j=1}^{d}a_{ij}\xi_j\leq b_i\quad\text{for}\quad x=\left(\xi_1,\, \cdots,\,\xi_d\right)\quad\text{and}\quad i\in I\bigg\}$$
    for some finite (possibly empty) set $I,$ where $a_{ij}$ and $b_i$ are integer numebrs.
\end{definition}
Note that we always assume that the polyhedra that we deal with are rational polyhedra throughout the paper.
\begin{theorem}[\cite{Barvinok93}]\label{Barvinok-volume-theo}
	There exists a polynomial time algorithm for computing the volume of one rational polyhedra. 
\end{theorem}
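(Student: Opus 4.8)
The plan is to reconstruct Barvinok's exponential-integral argument, which reduces the volume of a rational polyhedron to extracting a single coefficient from a sum of explicit rational functions attached to its vertices. Throughout, $d$ is a fixed constant and $n$ denotes the number of defining linear inequalities; without the fixed-dimension assumption the problem is $\#P$-hard, as noted above. First I would dispose of the degenerate cases by linear programming: if $P$ is empty, or its affine hull has dimension $<d$, return $0$; if $P$ is full-dimensional and unbounded, return $+\infty$. So assume $P$ is a bounded, full-dimensional polytope. Enumerate the vertices $v_1,\dots,v_m$ of $P$ by inspecting $d$-element subsets of the $n$ facet inequalities and testing which yield a feasible vertex; since $d$ is fixed this produces $m=O(n^d)$ vertices in polynomial time. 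For each $v_i$ compute the supporting (tangent) cone $K_i=v_i+\mathrm{cone}\{\text{edge directions of }P\text{ at }v_i\}$ and triangulate it into finitely many simplicial cones $K_{i,1},\dots,K_{i,s_i}$; in fixed dimension the number $s_i$ of pieces and the cost of producing them are polynomial in $n$.

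Second, choose a rational vector $c\in\mathbb{R}^d$ that is \emph{generic}, i.e.\ $\langle c,u\rangle\neq 0$ for every extreme-ray direction $u$ of every $K_{i,j}$; such a $c$ is found by testing polynomially many candidates or by a deterministic bound on the coefficient sizes. For a simplicial cone $K=w+\mathrm{cone}\{u_1,\dots,u_d\}$ with the $u_j$ oriented so that $\langle c,u_j\rangle<0$, the elementary identity
$$\int_{K} e^{\langle c,x\rangle}\,dx \;=\; e^{\langle c,w\rangle}\,\frac{|\det(u_1,\dots,u_d)|}{\prod_{j=1}^{d}\bigl(-\langle c,u_j\rangle\bigr)}$$
expresses $\int_K e^{\langle c,x\rangle}\,dx$ as $e^{\langle c,w\rangle}$ times an explicit rational function of $c$. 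Brion's valuation identity then gives, as meromorphic functions of $c$,
$$\int_{P} e^{\langle c,x\rangle}\,dx \;=\; \sum_{i=1}^{m}\int_{K_i} e^{\langle c,x\rangle}\,dx \;=\; \sum_{i=1}^{m}\sum_{j=1}^{s_i}\varepsilon_{i,j}\,e^{\langle c,v_i\rangle}\,\frac{|\det U_{i,j}|}{\prod_{\ell=1}^{d}\bigl(-\langle c,u^{(i,j)}_{\ell}\rangle\bigr)},$$
where the signs $\varepsilon_{i,j}\in\{+1,-1\}$ come from the triangulations of the $K_i$. Now substitute $c\leftarrow\tau c$ with a scalar parameter $\tau$. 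The left side becomes $\int_P e^{\tau\langle c,x\rangle}\,dx=\sum_{k\ge 0}\frac{\tau^k}{k!}\int_P\langle c,x\rangle^k\,dx$, an entire function of $\tau$ whose value at $\tau=0$ is exactly $\mathrm{vol}(P)$. Each summand on the right becomes $\varepsilon_{i,j}|\det U_{i,j}|\,\tau^{-d}e^{\tau\langle c,v_i\rangle}/\prod_{\ell=1}^{d}\bigl(-\langle c,u^{(i,j)}_{\ell}\rangle\bigr)$, which has a pole of order at most $d$ at $\tau=0$. Since the total sum is holomorphic at $\tau=0$, it suffices to compute the Laurent expansion of each summand up to the $\tau^0$ term — obtained from the first $d+1$ terms of the power series of $e^{\tau\langle c,v_i\rangle}$ — add them (all negative powers of $\tau$ cancel), and read off the $\tau^0$-coefficient; this equals $\mathrm{vol}(P)$. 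Everything here is a polynomial number of exact rational operations on numbers of polynomially bounded bit-length, so the whole algorithm runs in time polynomial in $n$.

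The one place that genuinely needs justification — the main obstacle — is the claim that the sum of the vertex rational functions is holomorphic at $\tau=0$, equivalently that the order-$d$ poles of the separate vertex terms cancel. This is exactly Brion's theorem: in the polytope algebra the indicator $[P]$ equals $\sum_i[K_i]$ modulo indicators of polyhedra containing a line, and the functional $[Q]\mapsto\int_Q e^{\langle c,x\rangle}\,dx$ (defined by analytic continuation for generic $c$) vanishes on any cone containing a line. I would either cite Brion's identity as known, or prove it by the standard argument: triangulate $P$, and show by a telescoping cancellation over shared faces that the vertex-cone decompositions of the simplices assemble to that of $P$. The remaining items — bounding $m$ and the $s_i$ by polynomials in $n$ for fixed $d$, controlling bit-sizes through the arithmetic, and producing a generic $c$ — are routine. (For the volume alone one can bypass cones entirely: triangulate $P$ itself into polynomially many simplices and sum $\tfrac{1}{d!}|\det(\cdot)|$ over them; I follow the exponential-integral route because it is the one that generalizes to lattice-point counting and, in this paper, to polyhedra circuits.)
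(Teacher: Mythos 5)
The paper does not actually prove this statement: it is imported verbatim from Barvinok's 1993 paper and used as a black box (its only role here is to define the running time $T_V(d,n)$ that appears in Lemma~\ref{atomic_decomposition_thm} and Theorem~\ref{volume-basic-thm}). So there is no internal proof to compare against; what you have written is a reconstruction of the external citation. As such, your outline is essentially Barvinok's own argument and is sound: reduce to the bounded full-dimensional case by LP, enumerate the $O(n^d)$ vertices and triangulate their tangent cones (polynomially many pieces for fixed $d$), apply Brion's identity to write $\int_P e^{\langle c,x\rangle}dx$ as a sum of vertex rational functions, and recover $\mathrm{vol}(P)$ as the $\tau^0$-coefficient of the Laurent expansions after the substitution $c\mapsto\tau c$, using holomorphy of the left-hand side to guarantee cancellation of the poles. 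Two remarks. First, you correctly restore the hypothesis that $d$ is fixed, which the paper's statement omits but implicitly assumes (and must assume, given the \#P-hardness it cites for varying dimension); your version is the accurate one. Second, a small imprecision: at a vertex whose tangent cone is not pointed ``against'' $c$, you cannot literally ``orient the $u_j$ so that $\langle c,u_j\rangle<0$'' --- the edge directions of a tangent cone are determined by $P$ --- and the formula for such vertices is only valid as the meromorphic continuation of the convergent case; you do invoke continuation in the next sentence, so this is presentational rather than a gap, but it deserves one explicit sentence. The only substantive item you defer is Brion's theorem itself, which you correctly identify as the crux and which is legitimate to cite. Your closing parenthetical is also right: for volume alone, direct triangulation of $P$ into $O(\mathrm{poly}(n))$ simplices and summation of $\frac{1}{d!}\left|\det(\cdot)\right|$ is simpler and would equally support everything this paper needs from $T_V(d,n)$.
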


\begin{theorem}[\cite{Barvinok94}]\label{Barvinok-lattice-point-theo}
	Let us fix $d\in \mathbb{N}.$ Then there exists a polynomial time algorithm for counting lattice points in one rational polyhedra when the dimension $d$ is fixed. 
\end{theorem}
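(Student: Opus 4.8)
\quad The plan is to follow the generating-function approach of Barvinok~\cite{Barvinok94}, later simplified by Dyer and Kannan~\cite{Dyer97}. First I would dispose of unboundedness: the recession cone of $P$ is computable from the inequality list, and if it is nontrivial then (when $P\neq\emptyset$) $P$ contains infinitely many lattice points, which the algorithm reports; so assume $P$ is a rational polytope. To $P$ attach its lattice-point generating function $f(P,\mathbf{x})=\sum_{\mathbf{m}\in P\cap\mathbb{Z}^d}\mathbf{x}^{\mathbf{m}}$, where $\mathbf{x}^{\mathbf{m}}=x_1^{m_1}\cdots x_d^{m_d}$; this is a Laurent polynomial whose value at $\mathbf{x}=\mathbf{1}$ is exactly $|P\cap\mathbb{Z}^d|$. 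The whole strategy is to compute a short closed form for $f(P,\mathbf{x})$ as a sum of simple rational functions and then evaluate that form at $\mathbf{1}$ by a limiting process.

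Second, by Brion's theorem~\cite{Brion88,Brion92}, $f(P,\mathbf{x})=\sum_{v}f(C_v,\mathbf{x})$ as an identity of rational functions, where $v$ ranges over the vertices of $P$ and $C_v=v+\mathrm{cone}(P-v)$ is the supporting cone at $v$; the non-polynomial parts of the individual summands cancel in the sum. Each $C_v$ is a pointed rational cone, which I would triangulate into simplicial cones. For fixed $d$ this produces only polynomially many cones, each given by $d$ generators of polynomially bounded bit-length, with inclusion--exclusion (or a generic perturbation) used to absorb the shared lower-dimensional faces.

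Third, and this is the crux, apply Barvinok's signed cone decomposition to each simplicial cone. Given a simplicial cone $K=\mathrm{cone}(u_1,\dots,u_d)$ whose index $\Delta=\abs{\det(u_1,\dots,u_d)}$ exceeds $1$, use a polynomial-time lattice routine valid in fixed dimension (shortest-vector search, or Lenstra's integer-programming method) to find a lattice vector $w$ that is short in the coordinate system of the $u_i$; replacing $u_i$ by $w$ in turn writes $K$, modulo cones of lower dimension, as a $\pm1$ combination of at most $d$ simplicial cones, each of index at most $\Delta^{(d-1)/d}$. Iterating, and recursing on the lower-dimensional pieces, drives every index down to $1$ after $O(d\log\log\Delta)$ rounds, each round multiplying the number of cones by at most $d$; since $\log\Delta$ is linear in the input size, the total number of unimodular cones produced is polynomial in the input size (with exponent $O(d\log d)$), hence polynomial for fixed $d$. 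For a unimodular simplicial cone $v+\mathrm{cone}(u_1,\dots,u_d)$ with the $u_i$ a lattice basis, its generating function is the explicit rational function $\mathbf{x}^{p}/\prod_{i=1}^{d}(1-\mathbf{x}^{u_i})$, where $p$ is the unique lattice point in the half-open fundamental parallelepiped spanned by the $u_i$ at apex $v$.

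Finally, to extract $f(P,\mathbf{1})$, substitute $\mathbf{x}=e^{\tau\xi}$ for an integer vector $\xi$ chosen generic enough that $\langle\xi,u_i\rangle\neq0$ for every generator $u_i$ that occurs, expand each of the polynomially many terms as a Laurent series in $\tau$ about $\tau=0$, add them, invoke the regularity of the polynomial $f(P,\mathbf{x})$ at $\mathbf{x}=\mathbf{1}$ to conclude that all negative powers of $\tau$ cancel, and read off the constant term, which equals $|P\cap\mathbb{Z}^d|$; every quantity involved has size polynomial in the input once $d$ is fixed. I expect the main obstacle to be this third step: proving that the signed decomposition really terminates in $O(d\log\log\Delta)$ rounds and yields only polynomially many cones, which rests on the guaranteed existence of a suitably short lattice vector $w$ together with a fixed-dimension polynomial-time algorithm to compute it. The remaining steps are essentially bookkeeping with rational functions, the one delicate point being the choice of a non-degenerate direction $\xi$ in the final limit.
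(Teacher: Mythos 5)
This theorem is imported from the literature: the paper states it with a citation to \cite{Barvinok94} and offers no proof of its own, so there is no internal argument to compare yours against. Your sketch is, in substance, a faithful outline of the standard proof --- Brion's theorem \cite{Brion88}, triangulation of the supporting cones at the vertices, Barvinok's signed decomposition into unimodular cones via short lattice vectors, the explicit rational function $\mathbf{x}^{p}/\prod_{i}(1-\mathbf{x}^{u_i})$ for a unimodular cone, and evaluation at $\mathbf{x}=\mathbf{1}$ by a one-parameter substitution --- and the complexity accounting (at most $d$ new cones per round, index dropping from $\Delta$ to $\Delta^{(d-1)/d}$, hence $O(d\log\log\Delta)$ rounds and $(\log\Delta)^{O(d\log d)}$ unimodular cones, polynomial for fixed $d$) is the right one.

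One step as written is wrong, though easily repaired. You claim that if $P$ is nonempty and its recession cone is nontrivial then $P$ contains infinitely many lattice points. That is false for rational polyhedra: the ray $\left\{(x,y)\st 2x\le 1,\ -2x\le -1,\ -y\le 0\right\}$ is nonempty and unbounded but contains no lattice point at all. The correct dichotomy is that $P$ contains infinitely many lattice points if and only if $P$ contains at least one lattice point \emph{and} its (rational) recession cone is nontrivial; so the preprocessing must first decide whether $P\cap\mathbb{Z}^d\neq\emptyset$, e.g.\ by Lenstra-type fixed-dimension integer programming (or \cite{Clarkson95}), before branching on boundedness. With that correction the reduction to the bounded case is sound, and the rest of your outline matches the argument of \cite{Barvinok94} as simplified in \cite{Dyer97}.
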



\begin{definition}[\cite{Schrijver86}]\label{interger-linear-programming-def}
	{\it Integer Linear Programming} problem is a constrained optimization problem of the form:
	$$\max \big\{cx\ |\ Ax\leq b\,;\ x\ \text{integral}\big\},$$
	where $A$ is a given $n\times d$ matrix, $b$ an $n$-vector and $c$ a $d$-vector, and the entries of $A,\, b$ and $c$ are rational numbers.
\end{definition}


\begin{theorem}[\cite{Clarkson95}]\label{integer-linear-programming}
	Let us fix $d\in \mathbb{N}.$ Then there exists a las vegas algorithm to solve the integer linear programming problem when the dimension $d$ is small. 
\end{theorem}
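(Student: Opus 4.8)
Since this is a cited result (Clarkson's work on Las Vegas algorithms for linear and integer programming in small dimension), the plan is to reconstruct the argument in the form most useful here: an algorithm whose expected running time is polynomial in the number $n$ of inequalities for every fixed $d$. The first step is to recognize integer linear programming in fixed dimension as an \emph{LP-type problem} (equivalently, a problem over a violator space): to a subset $G$ of the input inequalities assign the value $w(G)$ equal to the lexicographically smallest point of $\{x\in\mathbb{Z}^d : A_G x\le b_G\}$ that maximizes $cx$, with a lexicographic tie-break and the formal symbols ``infeasible'' and ``unbounded'' adjoined as extremal values. One then checks the monotonicity axiom ($G\subseteq H\Rightarrow w(G)\le w(H)$) and the locality axiom directly. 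The crucial combinatorial input is the Doignon--Bell--Scarf theorem: the Helly number of $\mathbb{Z}^d$ is $2^d$, so every infeasible subsystem contains an infeasible subsystem of at most $2^d$ inequalities; after the standard lifting that turns the optimization instance into a feasibility instance (slicing on the objective value), this yields that the combinatorial dimension $\delta=\delta(d)$, i.e.\ the maximum size of a basis, is bounded by a function of $d$ alone, independent of $n$.

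With this structure in hand, the second step applies Clarkson's two phases. Phase one is iterated random sampling: repeatedly draw a random subset $R$ of $\Theta(\delta\sqrt n)$ inequalities, recursively compute $w(F\cup R)$ for the current ``must-include'' set $F$, test all $n$ inequalities for violation against the resulting point (each test is one dot product and comparison, hence $O(dn)$ per round), and whenever the violated set $V$ has size at most about $2\sqrt n$ set $F\leftarrow F\cup V$. A standard sampling lemma bounds the expected violation count by $O(\delta\sqrt n)$, so a round succeeds with constant probability, $|F|$ never exceeds $O(\delta^2\sqrt n)$, and after $O(\delta)$ successful rounds $F$ contains a basis. Phase two, applied to the resulting $O(\delta^2\sqrt n)$-inequality instance, is iterative reweighting: give each inequality an integer multiplicity $\mu_h\ge 1$, repeatedly draw a multiset sample of size $\Theta(\delta^2)$ according to the $\mu$'s, recursively solve it, and double the multiplicities of all violated inequalities whenever their total multiplicity is at most a $\tfrac{1}{3\delta}$ fraction of the whole; in expectation $O(\delta\log n)$ rounds suffice, each costing $O(\delta^2\sqrt n)$ for the violation tests.

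The third step supplies the base case of the recursion: once a sub-instance has only $O(\delta^2)$ inequalities --- a number depending on $d$ alone --- solve it exactly with a Lenstra-type integer programming algorithm in $d^{O(d)}\varphi$ time, where $\varphi$ is the maximum bit length of the input coefficients. Charging the recursion and tallying the three contributions gives an expected running time of the claimed polynomial-in-$n$ form, e.g.\ $O\!\left(2^d d n + 8^d d\sqrt{n\log n}\,\log n\right)+d^{O(d)}\varphi\log n$, which for fixed $d$ is polynomial (indeed near-linear) in $n$. The Las Vegas guarantee is immediate: the algorithm only ever reports a point after testing it against all $n$ inequalities, so the answer is always correct and only the running time is random.

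The main obstacle is not any single phase but the interface between them. The LP-type axioms must be established with genuine care: feasibility has to be made ``generic'' (via the lexicographic rule) so that $w$ is single-valued; infeasibility and unboundedness must be threaded consistently through the recursion as formal values; and --- most delicately --- one must verify that the $2^d$ Helly bound really controls \emph{basis size} for the optimization problem, not merely the size of infeasibility certificates, which is where the objective-value slicing is needed. A secondary difficulty is ensuring the base-case solver is robust on the low-dimensional and non-full-dimensional polyhedra that arise after conditioning on many must-include inequalities, and confirming that its $d^{O(d)}\varphi$ cost is not inflated by the repeated substitution of fixed inequalities.
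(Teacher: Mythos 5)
The paper offers no proof of this statement: it is imported verbatim as a black-box citation of Clarkson's JACM paper, and everything downstream only uses it through the abstraction $T_I(d,n)$. So there is nothing in the paper to compare your argument against line by line. That said, your reconstruction is a faithful summary of what Clarkson actually does: the two-phase random sampling (a $\Theta(\delta\sqrt{n})$-size recursive sampling phase followed by iterative reweighting), the reduction of the combinatorial dimension to a function of $d$ alone via the Doignon--Bell--Scarf bound of $2^d$ (with the optimization-to-feasibility lifting needed to control basis size rather than just infeasibility certificates), a Lenstra-type solver at the base case, and the Las Vegas guarantee coming from the fact that every reported point is verified against all $n$ constraints. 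The expected running time you quote, $O\left(2^d d n + 8^d d\sqrt{n\log n}\,\log n\right)+d^{O(d)}\varphi\log n$, is exactly Clarkson's bound (the authors even define a macro for it in their preamble without ever using it). Two small caveats: your LP-type/violator-space framing is a later formalization than Clarkson's own ``basis of an optimal bounded subset'' language, though the two are equivalent for this purpose; and the sharp constant for the optimization basis size is $2(2^d-1)$ rather than $2^d$, which does not affect the conclusion since only its dependence on $d$ alone matters. Neither point is a gap at the level of detail the paper requires.
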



\begin{definition}
	Let us fix $d\in \mathbb{N}.$ Define $T_V(d,\, n)$ be the polynomial time in $n$ to compute the volume of one rational polyhedra by Theorem~\ref{Barvinok-volume-theo}, $T_L(d,\, n)$ be the polynomial time in $n$ to count the number of lattice points in one rational polyhedra with $d$ be a fixed dimensional number by Theorem~\ref{Barvinok-lattice-point-theo}, $T_I(d,\, n)$ be the polynomial time in $n$ to solve integer linear programming time with the fixed dimensional number $d$ by Theorem~\ref{integer-linear-programming}, where $n$ is the total number of linear inequalities from input polyhedra.
\end{definition}

\section{Algorithms about Polyhedra Circuits}\label{basic-sec}
In this section, we give the definition of polyhedra circuits and develop algorithms to compute the volume of a region given by a polyhedra circuit and to count the number of lattice points inside a region given by a polyhedra circuit. We always assume that the linear inequalities that we deal with are linear inequalities in $\mathbb{R}^d$ and the coefficient of each linear inequalities are integers throughout the paper, which means the polyhedra that we deal with are rational polyhedra throughout the paper.

\begin{definition}
	A {\it hyper plane} in $\mathbb{R}^d$ can be defined by a linear equation of the form 
	$$a_1x_1+a_2x_2+\cdots+a_dx_d=b,$$
	where $a_1,\, a_2,\,...,\, a_d$ and $b$ are constants.
\end{definition}

\begin{definition}
Define $r_d(n)$ to be the maximal number of regions that $n$ hyper planes partition $\mathbb{R}^d$.
\end{definition}

Lemma~\ref{num-of-regions} gives the lower bound and upper bound of the maximal number $r_d(n)$ of regions that $n$ hyper planes partition $\mathbb{R}^d$.
\begin{lemma}\label{num-of-regions}
 $\left(\floor{n\over d}\right)^d\le r_d(n)\le {n^d\over d!}+(n+1)^{d-1}$ regions.
\end{lemma}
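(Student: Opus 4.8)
The plan is to prove the two bounds separately. For the upper bound, I would argue by induction on $n$ (the number of hyperplanes), with $d$ fixed, and use the standard fact that adding one hyperplane $H$ to an arrangement of $n-1$ hyperplanes creates exactly as many new regions as the number of regions into which the $(d-1)$-dimensional arrangement induced by $H$ (intersecting the other $n-1$ hyperplanes with $H$) divides $H$ itself — that is, at most $r_{d-1}(n-1)$ new regions. This gives the recurrence $r_d(n)\le r_d(n-1)+r_{d-1}(n-1)$, with base cases $r_0(n)=1$ for all $n$ and $r_d(0)=1$ for all $d$. Unrolling the recurrence yields $r_d(n)\le\sum_{i=0}^{d}\binom{n}{i}$. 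To match the stated bound, I would then bound this sum: the $i=d$ term is $\binom{n}{d}\le n^d/d!$, and the remaining terms $\sum_{i=0}^{d-1}\binom{n}{i}$ are crudely bounded by $\sum_{i=0}^{d-1}\binom{n+1}{i}\cdot\frac{1}{?}$ — more carefully, $\sum_{i=0}^{d-1}\binom{n}{i}\le (n+1)^{d-1}$, which holds since each $\binom{n}{i}\le n^i/i!\le n^i$ and there are $d$ terms of the form $n^i$ with $i\le d-1$; a clean way is to note $\sum_{i=0}^{d-1}\binom{n}{i}$ counts the number of subsets of $\{1,\dots,n\}$ of size at most $d-1$, and each such subset can be encoded as a $(d-1)$-tuple from $\{0,1,\dots,n\}$ (listing the elements, padding with $0$'s), so the count is at most $(n+1)^{d-1}$. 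Combining gives $r_d(n)\le \frac{n^d}{d!}+(n+1)^{d-1}$.

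For the lower bound $\left(\floor{n/d}\right)^d\le r_d(n)$, I would exhibit an explicit arrangement. Partition the $n$ hyperplanes into $d$ groups, each of size at least $m:=\floor{n/d}$. For the $j$-th group, take $m$ distinct hyperplanes all of the form $x_j=c$ for various constants $c$ (i.e., all orthogonal to the $j$-th coordinate axis). These $m$ parallel hyperplanes partition the $x_j$-axis into $m+1$ intervals, hence slice $\mathbb{R}^d$ into $m+1$ slabs in the $x_j$-direction. Since the groups use pairwise-distinct coordinate directions, the regions of the full arrangement are exactly the products of the slabs over $j=1,\dots,d$, giving $(m+1)^d$ regions, which is already $\ge (m+1)^d > m^d = \left(\floor{n/d}\right)^d$. (If $n$ is not divisible by $d$ we simply have some groups of size $m+1$, which only helps; and any leftover hyperplanes only increase the region count, so $r_d(n)\ge (m+1)^d\ge\left(\floor{n/d}\right)^d$.) I should double-check the edge case $d\nmid n$ and the trivial cases $n<d$, where $\floor{n/d}=0$ and the bound $0\le r_d(n)$ is vacuous.

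The main obstacle I anticipate is not the combinatorial recurrence — that is classical (Buck's theorem / the theory of hyperplane arrangements) — but rather the bookkeeping to get the inductive step exactly right, in particular handling hyperplanes in "general position" versus degenerate configurations. The recurrence $r_d(n)\le r_d(n-1)+r_{d-1}(n-1)$ is valid for the \emph{maximum} over all arrangements because a new hyperplane $H$ meets the previous $n-1$ hyperplanes in at most $n-1$ distinct $(d-2)$-flats within $H$, and these cut $H\cong\mathbb{R}^{d-1}$ into at most $r_{d-1}(n-1)$ pieces, each of which splits one old region into two; so the increment is at most $r_{d-1}(n-1)$. I would also need to confirm the base cases of the induction feed the closed form correctly, i.e. verify $\sum_{i=0}^d\binom{n}{i}$ satisfies the same recurrence and initial conditions via Pascal's identity $\binom{n}{i}=\binom{n-1}{i}+\binom{n-1}{i-1}$. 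The passage from the binomial sum to the stated algebraic expression $\frac{n^d}{d!}+(n+1)^{d-1}$ is then just the elementary estimate described above.
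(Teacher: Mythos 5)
Your proposal is correct and follows essentially the same route as the paper: the same recurrence $r_d(n)\le r_d(n-1)+r_{d-1}(n-1)$ (the paper writes it as $r_{d+1}(n+1)=r_{d+1}(n)+r_d(n)$) for the upper bound, and the same construction of $\floor{n/d}$ hyperplanes parallel to each coordinate axis for the lower bound. The only difference is cosmetic: the paper inducts directly on the claimed closed form $\frac{n^d}{d!}+(n+1)^{d-1}$, whereas you first unroll the recurrence to the exact maximum $\sum_{i=0}^{d}\binom{n}{i}$ and then bound that sum, which is if anything slightly cleaner.
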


\begin{proof}
We have the recursion $r_{d+1}(n+1)=r_{d+1}(n)+r_d(n)$. This is
because the $n+1$-th $d$-dimensional hyper plane can have at most
$r_d(n)$ regions in it, and increases the number of regions by at
most $r_d(n)$. It is trivial $r_1(n)\le n+1$, and also trivial for
$n=0$ or $1$ for all $d\ge 1$. It follows a simple induction. Assume
that $r_d(n)\le {n^d\over d!}+(n+1)^{d-1}$ and $r_{d+1}(n)\le
{n^{d+1}\over (d+1)!}+(n+1)^d$.

\begin{eqnarray*}
r_{d+1}(n+1)&=& r_{d+1}(n)+r_d(n)\\
&\le& ({n^{d+1}\over
(d+1)!}+(n+1)^d)+({n^d\over d!}+(n+1)^{d-1})\\
&=& {n^{d+1}+(d+1)n^d\over (d+1)!}+((n+1)^d+(n+1)^{d-1})\\
&<& {(n+1)^{d+1}\over (d+1)!}+(n+2)^d.
\end{eqnarray*}

For the left side of the inequalities, we just let there are
${n\over d}$ hyper planes to be parallel to each of $d$ axis in
$\mathbb{R}^d$. Therefore, we have the lower bound $\left(\floor{n\over d}\right)^d$.

\end{proof}

\begin{definition}
For a set of linear inequalities in $\mathbb{R}^d$, a region is {\it atomic}
if it is formed by a subset of linear inequalities, and does not
contain any proper subregion that can be formed by another subset of
linear inequalities.
\end{definition}
\begin{figure}[H]
	\centering
	\includegraphics[width=0.5\textwidth]{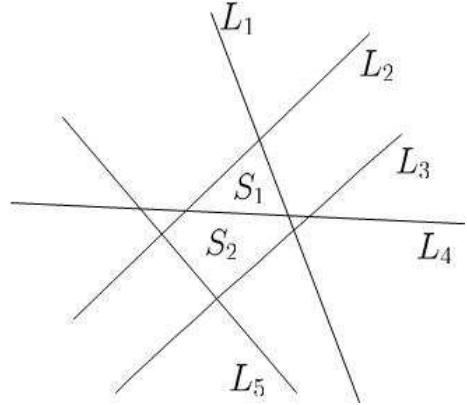} 
	\caption{Example of atomic regions with $S_1$ and $S_2$ are atomic regions, but $S_1\cup S_2$ is not atomic region, although it is formed by a subset linear inequalities.}
\end{figure}

\begin{definition}\label{circuit-def}
	{\it A polyhedra circuit} consists is a circuit that consists of multiple  layers of gates:
	\begin{itemize}
		\item Each input gate is a linear inequality of format  $\sum\limits_{i=1}^d a_ix_u\le b$ or $\sum\limits_{i=1}^d a_ix_u< b$ (represent a half space in $\mathbb{R}^d$), where $b, a_1,\cdots,a_d$ are integers, 
		
		\item Each internal gate is either union or intersection operation, and
		
		\item The only output gate is on the top of the circuit.
	\end{itemize}
	
\end{definition}
\begin{figure}[H]
	\centering
	\includegraphics[width=0.8\textwidth]{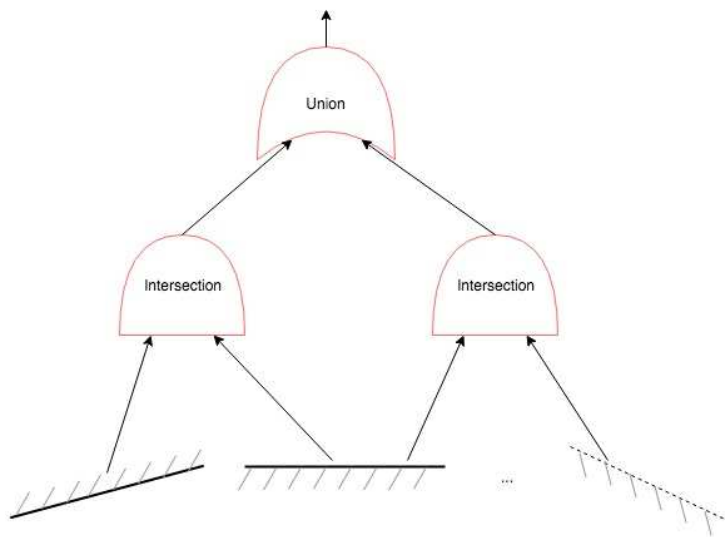} 
	\caption{Example of a Polyhedra Circuit}
\end{figure}

For example, the expression $\left(\left(-x<-1\right)\wedge\left(x\leq 3\right)\right)\vee\left(\left(-x\leq -2\right)\wedge\left(x\leq 5\right)\right)$ is $\mathbb{R}^1$ polyhedra circuit to express to the union of $(1,\, 3]$ and $[2,\, 5].$ Its output is the region $(1,\, 5]$ in $\mathbb{R}^1.$

Note that we do not consider the negation operation in our circuit gates, since a negation operation in a gate can be handled by taking the negation of the input linear inequalities via De Morgan's laws. We note also that a union of multiple polyhedra can produce very complicate shape in a $d$-dimensional space for $d\geq 2.$

Lemma~\ref{circuit-union-lemma} shows that the output of a polyhedra circuit is a disjoint union of atomic regions.
\begin{lemma}\label{circuit-union-lemma}
	The region generated by a polyhedra circuit is a disjoint union of
	atomic regions via the hyper planes of the input polyhedra.
\end{lemma}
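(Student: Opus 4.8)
```latex
\begin{sketch}
The plan is to argue by induction on the structure (number of gates) of the polyhedra circuit $C$. Let $H$ be the finite set of hyper planes obtained from the input linear inequalities of $C$; these hyper planes partition $\mathbb{R}^d$ into finitely many atomic regions (in the sense of the preceding definition), and every atomic region is entirely on one side of each hyper plane in $H$. The key invariant I want to maintain is: for every gate $g$ of $C$, the region $R_g\subseteq\mathbb{R}^d$ computed at $g$ is a union of atomic regions determined by $H$. Since any two distinct atomic regions are disjoint (they differ on at least one of the hyper planes, or on a strict-versus-nonstrict boundary), such a union is automatically a \emph{disjoint} union, so it suffices to prove the ``union of atomic regions'' statement.

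The base case is an input gate: a single linear inequality $\sum_{i=1}^d a_ix_i\le b$ (or the strict version) defines a half space whose bounding hyper plane is one of the elements of $H$. I claim this half space is a union of atomic regions. Indeed, every atomic region lies wholly on one side of that hyper plane (this is essentially the definition of the arrangement induced by $H$, together with the fact that strict/nonstrict boundary cells are themselves atomic), so the half space is exactly the union of those atomic regions that fall on its side. For the inductive step, consider an internal gate $g$ with inputs $g_1,g_2$; by the induction hypothesis $R_{g_1}$ and $R_{g_2}$ are each unions of atomic regions. If $g$ is a union gate, then $R_g=R_{g_1}\cup R_{g_2}$ is again a union of atomic regions. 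If $g$ is an intersection gate, then $R_g=R_{g_1}\cap R_{g_2}$; writing $R_{g_1}=\bigcup_{i}A_i$ and $R_{g_2}=\bigcup_{j}A_j'$ with each $A_i,A_j'$ atomic, we get $R_g=\bigcup_{i,j}(A_i\cap A_j')$, and because distinct atomic regions are disjoint while equal ones intersect in themselves, each $A_i\cap A_j'$ is either empty or an atomic region; hence $R_g$ is a union of atomic regions. Applying the invariant to the output gate gives the lemma.

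The only point that needs a little care --- and which I expect to be the main (minor) obstacle --- is pinning down exactly what ``atomic region via the hyper planes of the input polyhedra'' means when strict and non-strict inequalities are mixed, so that the base-case claim ``each half space is a union of atomic regions'' is literally true. One must take the arrangement cells of $H$ at the right granularity: open full-dimensional cells, together with the relatively open faces lying on the hyper planes, so that for each input inequality (strict or not) its solution set is a union of such cells. Once this refinement is fixed, the two facts used above --- every atomic region lies entirely inside or entirely outside each input half space, and two atomic regions are either disjoint or identical --- are immediate, and the induction goes through with no further work.
\end{sketch}
```
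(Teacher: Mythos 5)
Your proof takes essentially the same approach as the paper's: an induction over the circuit (the paper inducts on depth, you on structure), maintaining the invariant that each gate's region is a union of atomic regions, with the base case being input half spaces and the inductive step using closure of such unions under union and intersection. Your version is more detailed — in particular you justify why the intersection of two unions of atomic regions is again such a union, and you flag the strict/non-strict granularity issue — but the underlying argument is the one the paper gives.
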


\begin{proof}
	It can be proven via an induction on the depth of the circuits. It is
	trivial when the circuits has depth zero. Assume that there are $G_1,\,\cdots,\, G_k$ gates in depth $k$, and each $G_i$
	generates a region to be the union of atomic regions for $i=1,\,\cdots,\, k$. We need to show that each gate $G_j$ in depth $k+1$ generates a region to be the union of atomic regions.
	
	The union or intersection of the union of atomic regions also be the union of atomic regions. Every gate is either union or intersection via Definition~\ref{circuit-def}, therefore, every gate $G_j$ in depth $k+1$ generates a region to be the union of atomic regions.

	Assume that $G$ is the
	output gate. 
	By the induction, we can say that region from $G$ is also the union of atomic regions.
\end{proof}

Lemma~\ref{point-atomic_thm} shows that the volume of rational polyhedra formed by $Ax< b$ equals the volume of rational polyhedra formed by $Ax\leq b.$
\begin{lemma}\label{point-atomic_thm}
	The volume of atomic region formed by $Ax< b$ equals the volume of atomic region formed by $Ax\leq b,$ where $A$ is a given $n\times d$ matrix, $b$ is an $n$-dimensional vector
\end{lemma}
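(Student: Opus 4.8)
The plan is to show that the set $H = \{x : Ax < b\} \setminus \{x : Ax \le b\}$ — more precisely, the boundary region separating the strict and non-strict versions — has Lebesgue measure zero, so that volume is insensitive to whether the inequalities are strict. Write $S_< = \{x \in \mathbb{R}^d : Ax < b\}$ and $S_\le = \{x \in \mathbb{R}^d : Ax \le b\}$; clearly $S_< \subseteq S_\le$, so it suffices to prove $\mathrm{vol}(S_\le \setminus S_<) = 0$. A point $x$ lies in $S_\le \setminus S_<$ exactly when $Ax \le b$ holds but at least one coordinate is tight, i.e., $\sum_j a_{ij}\xi_j = b_i$ for some row $i$. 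Hence $S_\le \setminus S_< \subseteq \bigcup_{i=1}^n \{x : \sum_j a_{ij}\xi_j = b_i\}$, a finite union of hyperplanes (discarding any trivial row where the row of $A$ is zero, which contributes either everything or nothing and does not affect this containment argument in the nontrivial case; that degenerate case should be noted separately).

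The key step is then the standard fact that each affine hyperplane in $\mathbb{R}^d$ has $d$-dimensional Lebesgue measure zero: after an orthogonal change of coordinates a hyperplane becomes $\{x : \xi_d = c\}$, which is a countable union of bounded pieces each contained in a box of volume $\varepsilon \cdot M^{d-1}$ for arbitrarily small $\varepsilon$, hence has measure zero; measure zero is preserved under the (volume-preserving) rotation. A finite union of measure-zero sets is measure zero, so $S_\le \setminus S_< $ is a null set and $\mathrm{vol}(S_\le) = \mathrm{vol}(S_<)$.

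I do not expect a genuine obstacle here — the only things requiring care are (i) handling rows of $A$ that are identically zero, where the inequality $0 < b_i$ versus $0 \le b_i$ differs only when $b_i = 0$, in which case $Ax < b$ is empty while $Ax \le b$ may be all of $\mathbb{R}^d$; one should argue that in the context of an atomic region this does not arise, or restrict attention to the nondegenerate hyperplane-defining rows, and (ii) being precise that "atomic region formed by $Ax < b$" and "by $Ax \le b$" refer to the same index set of inequalities so that the containment $S_< \subseteq S_\le$ is literal. With those caveats the argument is a short measure-theoretic observation, and I would write it in three or four sentences rather than as a lengthy calculation.
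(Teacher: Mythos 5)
Your argument is correct and is essentially the same as the paper's, which simply observes that the boundary of the region $Ax\leq b$ has no volume; you merely make this precise by bounding the set difference by a finite union of hyperplanes and invoking that hyperplanes have Lebesgue measure zero. Your extra caveat about identically zero rows of $A$ is a legitimate point of care that the paper's one-line proof glosses over, but it does not change the substance of the argument.
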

\begin{proof}
	Since the boundaries of the atomic region formed by $Ax\leq b$ have no volume, then the volume of atomic region that is formed by $Ax< b$ equals the volume of atomic region that is formed by $Ax\leq b,$ where $A$ is a given $n\times d$ matrix and $b$ is an $n$-dimensional vector.
\end{proof}

Lemma~\ref{integer-point-atomic_thm} shows that the number of lattice points in rational polyhedra formed by $Ax< b$ equals the the number of lattice points in rational polyhedra formed by $Ax\leq b-I,$ where all the elements of $n\times d$ matrix $A$ and $n$-dimensional vector $b$ are inegers, and $I$ is $n$-dimensional vector with all $1s.$

\begin{lemma}\label{integer-point-atomic_thm}
	The number of lattice points in atomic region formed by $Ax< b$ equals the number of lattice points in atomic region formed by $Ax\leq c,$ where $A$ is a given $n\times d$ matrix with integer elements, $b$ is an $n$-dimensional vector with integer elements and $c$ is an $n$-dimensional vector with $c=b-I$ for $n$-dimensional vector $I$ whose elements are all $1s$.
\end{lemma}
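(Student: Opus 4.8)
The plan is to reduce the strict-inequality system to a non-strict one by exploiting integrality. The key observation is that for integer vectors, the strict inequality $a^\top x < b_i$ with $a$ and $b_i$ integers is equivalent, when $x$ ranges over $\mathbb{Z}^d$, to the non-strict inequality $a^\top x \le b_i - 1$: indeed $a^\top x$ is itself an integer, so $a^\top x < b_i$ holds iff $a^\top x \le b_i - 1$. Applying this to each of the $n$ rows simultaneously gives that $Ax < b$ (interpreted coordinatewise, all strict) has exactly the same integer solutions as $Ax \le b - I$, where $I$ is the all-ones $n$-vector, which is precisely $c$.

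First I would fix a lattice point $x \in \mathbb{Z}^d$ and argue the equivalence row by row. For each $i \in \{1, \dots, n\}$, write $s_i = \sum_{j=1}^d a_{ij}\xi_j$, which is an integer since all $a_{ij}$ and all $\xi_j$ are integers. Then $s_i < b_i \iff s_i \le b_i - 1 = c_i$, because $b_i$ is an integer and there is no integer strictly between $b_i - 1$ and $b_i$. Taking the conjunction over all $i$, $x$ satisfies $Ax < b$ if and only if $x$ satisfies $Ax \le c$. Hence the two atomic regions contain exactly the same lattice points, so in particular the same number of lattice points.

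One small point to state carefully is what "atomic region formed by $Ax < b$" versus "formed by $Ax \le c$" means: I would note that the atomic region determined by a sign pattern on a fixed arrangement of hyperplanes is a subset of $\mathbb{R}^d$, and that replacing each defining inequality $a_i^\top x < b_i$ by $a_i^\top x \le b_i - 1$ changes the real region (it shifts the supporting hyperplanes), but the claim is only about the set of \emph{lattice} points, and that set is unchanged by the argument above. I do not expect any real obstacle here — the statement is essentially an integrality triviality, analogous to Lemma~\ref{point-atomic_thm} but using the discreteness of $\mathbb{Z}$ in place of the measure-zero-boundary fact. The only thing to be mildly careful about is that the reduction must shift \emph{every} strict row by exactly $1$, and that this is valid precisely because the coefficients $a_{ij}$ are integers (so $a_i^\top x \in \mathbb{Z}$), an assumption the paper has already adopted globally.
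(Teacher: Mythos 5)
Your argument is exactly the paper's: both reduce each strict row $a_i^\top x < b_i$ to $a_i^\top x \le b_i - 1$ using the integrality of $a_i^\top x$ for lattice $x$, and conclude the two regions contain the same lattice points. Your version is written more carefully (the paper's proof even has a typo, writing $< b_1 - 1$ where it means $\le b_1 - 1$), but the approach is identical.
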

\begin{proof}
	The number of lattice points in linear inequality $a_1x_1+\cdots a_dx_d<b_1$ is the same as the number of lattice points in linear inequality $a_1x_1+\cdots a_dx_d<b_1-1,$ becasue $a_1,\,...,\,a_d$ and $b_1$ are integers and each $x_i$ will take an integer value. Therefore, the number of lattice points in atomic region formed by $Ax< b$ equals the number of lattice points in atomic region formed by $Ax\leq c,$ where $A$ is a given $n\times d$ matrix with integer elements, $b$ is an $n$-dimensional vector with integer elements and $c$ is an $n$-dimensional vector with $c=b-I$ for $n$-dimensional vector $I$ whose elements are all $1s$.
\end{proof}

Lemma~\ref{atomic_decomposition_thm} shows that a list of
linear inequalities partition space $\mathbb{R}^d$ into all of the atomic regions with an interior point in each atomic region.
\begin{lemma}\label{atomic_decomposition_thm}
There is a $\bigO\left(nd\cdot r_d(n)\cdot T_V(d,\, n)\right)$ time algorithm such that given a list of
$n$ linear inequalities in $\mathbb{R}^d$, it produces all of the atomic regions with one interior point
from each of them, where $T_V(.)$ is the time to compute the volume of one
rational polyhedra and $r_d(n)$ be the maximal number of regions that $n$ hyper planes partition $\mathbb{R}^d$.
\end{lemma}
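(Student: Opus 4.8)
\medskip
\noindent\textbf{Proof idea.}\quad
The plan is to enumerate the atomic regions by an incremental sweep over the $n$ input hyperplanes, maintaining at every stage the full-dimensional cells of the arrangement of the hyperplanes seen so far, each tagged with one rational interior point. Write the $i$-th input inequality as $a_i\cdot x\le b_i$ (or $<$) and put $H_i=\{x:a_i\cdot x=b_i\}$. A cell of the arrangement of $H_1,\dots,H_k$ is described by a sign vector $\sigma\in\{<,>\}^k$ via $R_\sigma=\{x:\sigma_i(a_i\cdot x-b_i)<0\text{ for }i\le k\}$, and the atomic regions of the whole input are exactly the full-dimensional $R_\sigma$ with $\sigma\in\{<,>\}^n$; by Lemma~\ref{num-of-regions} there are at most $r_d(n)$ of them, and, since $r_d$ is monotone, at most $r_d(n)$ at every intermediate stage. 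Starting from the single cell $\mathbb{R}^d$ with interior point $0$, I would process $H_{k+1}$ by doing the following for each stored pair $(R,p)$: compute the sign of $a_{k+1}\cdot p-b_{k+1}$; the cell of $R$ cut by $H_{k+1}$ on that side is full-dimensional and still contains $p$ in its interior, so keep it with $p$ unchanged; for the opposite side $R'=R\cap\{\sigma(a_{k+1}\cdot x-b_{k+1})<0\}$ I must decide whether $R'$ is full-dimensional and, if so, output a rational interior point of it.

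The heart of the argument is therefore the subroutine: given a polyhedron $Q\subseteq\mathbb{R}^d$ defined by at most $n$ strict linear inequalities with integer coefficients, decide whether $Q$ is full-dimensional and, if so, return a rational point of $\mathrm{int}(Q)$, using $O(d)$ calls to the volume algorithm of Theorem~\ref{Barvinok-volume-theo}. First I would intersect $Q$ with a box $[-M,M]^d$, where $M$ is an integer of bit-length polynomial in the input: every vertex of any cell of the arrangement is the solution of a $d\times d$ integer system, so by a Cramer's-rule bound its coordinates are polynomially bounded, hence the box does not change the full-dimensionality status of $Q$ and keeps every polyhedron fed to the oracle a rational polytope of polynomial description, so each volume call still costs $T_V(d,n)$. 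By Lemma~\ref{point-atomic_thm} we may compute ${\rm vol}(Q\cap[-M,M]^d)$ from the closed (nonstrict) version of the inequalities; if it is $0$ then $Q$ is not full-dimensional and is discarded, and otherwise the centroid $p$, with $p_i={\rm vol}(Q\cap[-M,M]^d)^{-1}\int_{Q\cap[-M,M]^d}x_i\,dx$, is a rational point in the interior of the polytope $Q\cap[-M,M]^d$ (the centroid of a full-dimensional convex body lies in its interior), hence in $\mathrm{int}(Q)$. Each of the $d$ first moments $\int x_i$ can be obtained by differentiating, at the origin, the closed-form exponential integral $\int e^{\langle\ell,x\rangle}\,dx$ that Barvinok's volume algorithm evaluates, so the subroutine uses $O(d)$ volume-equivalent computations; a Chebyshev-center linear program (polynomial time in fixed dimension) is an alternative that would only lower this cost.

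Assembling the pieces: the sweep has $n$ stages, each stage handles at most $r_d(n)$ stored cells, splitting each costs $O(d)$ volume computations, and each volume computation runs in $T_V(d,n)$ time, for a total of $O\!\left(nd\cdot r_d(n)\cdot T_V(d,n)\right)$; the output has size $O(n\,d\,r_d(n))$ and is dominated by this. Two bookkeeping points need attention. If a later hyperplane $H_{k+1}$ passes exactly through a stored interior point $p$ (so $a_{k+1}\cdot p=b_{k+1}$), I would simply re-invoke the subroutine on both sides, or, more uniformly, perturb each stored point by a tiny rational amount along a generic direction so that it avoids all $n$ input hyperplanes; either fix costs $O(1)$ per incident cell and is absorbed. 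Unbounded cells cause no trouble, because the bounding box turns every polyhedron handed to the volume oracle into a genuine rational polytope and a cell is full-dimensional iff its intersection with a sufficiently large box is.

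The step I expect to be the main obstacle is the interior-point subroutine: extracting an honest rational point strictly inside a cell that may be very ``flat'' or unbounded, while staying within the $O(d)$ volume-call budget and the $T_V(d,n)$ per-call cost. The bounding-box reduction together with the centroid (moment) computation is the cleanest route, and the main things to verify carefully there are the polynomial bit-length of $M$ and of the centroid's coordinates, so that the volume oracle's running time is really $T_V(d,n)$; both follow from standard Cramer's-rule estimates for arrangements of integer hyperplanes. Everything else---correctness of the sweep, the bound on the number of cells, and the handling of degenerate (coincident or lower-dimensional) intersections, which the sign-vector description absorbs automatically---is routine.
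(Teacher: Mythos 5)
Your overall architecture coincides with the paper's: both proofs insert the $n$ hyperplanes one at a time, keep the current list of full-dimensional atomic regions each tagged with an interior point, and use a volume computation (Theorem~\ref{Barvinok-volume-theo}, applied to strict inequalities via Lemma~\ref{point-atomic_thm}) to decide whether the new hyperplane genuinely splits a stored cell; the per-stage cell count is bounded by $r_d(n)$ through Lemma~\ref{num-of-regions} in both cases, and the accounting $n\cdot r_d(n)\cdot O(d)\cdot T_V(d,n)$ is identical. Where you genuinely diverge is the subroutine producing an interior point of a newly created cell. The paper descends through dimensions: it recursively finds an interior point $p_j$ of the $(d-1)$-dimensional atomic region cut out on the new hyperplane inside the old cell, then moves from $p_j$ along the normal direction to the first bounding hyperplane on each side and takes midpoints; the factor $d$ in its bound comes from the recursion $T_R(d,\cdot)=2r_d\cdot T_V+T_R(d-1,\cdot)$. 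You instead take the centroid of the boxed cell computed from its first moments (or a Chebyshev center by LP), and your factor $d$ comes from the $d$ moment computations. Your route handles two issues the paper passes over in silence: unbounded cells (your polynomially bounded box makes every oracle call a genuine rational polytope, whereas the paper feeds possibly unbounded regions to the volume oracle without comment) and the degenerate case where the new hyperplane passes exactly through a stored point. The one caveat on your side is that Theorem~\ref{Barvinok-volume-theo} as cited returns only the volume, so obtaining first moments by differentiating the exponential integral is an extension of the cited result rather than a black-box call; your LP fallback (or the paper's perpendicular-offset trick) closes that gap, and since LP in fixed dimension is cheap it does not disturb the stated bound.
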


\begin{proof}
It can be proven via an induction method. Let $L_1,\,
L_2,\,\cdots,\, L_n$ be the linear inequalities. If there is only one linear inequality, say $L_1,$ $L_1$ partitions $\mathbb{R}^d$ into two atomic regions $S_1$ and $S_2,$ where $S_1$ is formed by $L_1$ and $S_2$ is formed by $\neg L_1,$ it is easy to generate two interior points $p_1\in S_1$ and $p_2\in S_2.$

Assume that we have obtained atomic regions $S_1,\, S_2,\,\cdots,\, S_t$
that are formed by $L_1,\, \cdots,\, L_k$, and each $S_i$ has an interior
point $p_i\in S_i$ for $i=1,\,\cdots, t$. We need to show that we can obtian all of the
atomic regions that are formed by $L_1,\,\cdots,\, L_k,\, L_{k+1},$ and one interior point from each of them.

For each $S_i$, assume that $S_i$ is formed by hyper planes $L'_{1},\,\cdots,\, L'_{s}$ with $s\leq k.$ Consider $L_{k+1},$ and the volume of $S'_{i,\,1}$ formed by adding $L_{k+1}$ to hyper planes $L'_{1},\,\cdots,\, L'_{s}$ and $S'_{i,\,2}$ formed by adding $\neg L_{k+1}$ to hyper planes $L'_{1},\,\cdots,\, L'_{s}.$ Note that we compute the volume of $S'_{i,\,1}$ and $S'_{i,\,2}$ by using Theorem~\ref{Barvinok-volume-theo}. Note also that we can compute the volume of $S'_{i,\,1}$ and $S'_{i,\,2}$ correctly becasue of Lemma~\ref{point-atomic_thm}. 

We discuss two cases:

Case 1: If the volume of $S'_{i,\,1}$ or $S'_{i,\,2}$ equals to $0,$ then $S_i$ is not seperated by $L_{k+1}.$ In this case, the interior point is $p_i\in S_i.$

Case 2: Otherwise, $L_{k+1}$ partitions $S_i$ into two regions $S_{i,\,1}$ and $S_{i,\,2},$ and we have to find the interior point for each regions. In this case, $L_{k+1}$ intersection $S_i$ with a $(d-1)$-dimensional atomic region, let us denote as $B_{d-1}$. Similary, this $(d-1)$-dimensional atomic region $B_{d-1}$ can be partitioned into $(d-2)$-dimensional atomic regions. By using this way, we can easily find a interior point in $0$-dimensional atomic region, which is an interior point $p_j$ of $B_{d-1}.$   

Construct line $l=p_j+tv$ perpendicular to $L_{k+1}$ where $v$ is the vector that is perpendicular to $L_{k+1}.$ Find the points $p_h,\, p_z$ that $l$ intersection of the first hyper plane of $S_{i, 1}$ and the first hyper plane of $S_{i, 2},$ respectively (See Fig.~\ref{fig3}), then the interior point of $S_{i, 1}$ is $\frac{p_j+p_h}{2},$ and the interior point of $S_{i, 2}$ is $\frac{p_j+p_z}{2}.$
\begin{figure}[H]
	\centering
	\includegraphics[width=0.5\textwidth]{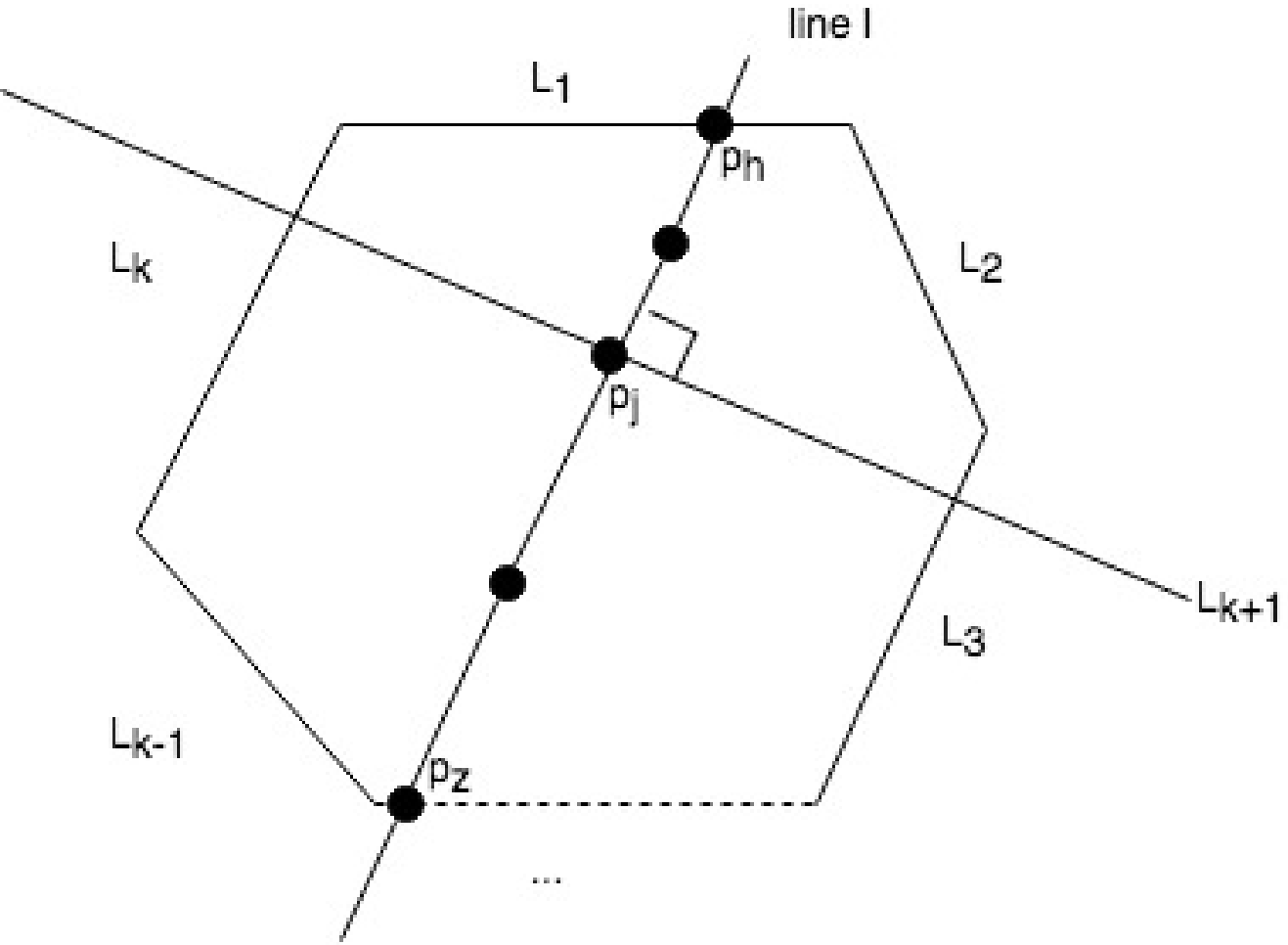} 
	\caption{}\label{fig3}
\end{figure}
Therefore, we have obtained atomic regions $S_1,\, S_2,\,\cdots,\, S_t,\, \cdots,\, S_h$ that are formed by $L_1,\,\cdots,\, L_k,\, L_{k+1}$ for some interger $h>t,$ and one interior point form each of them. 

Define $T_{R}\left(d,\,n\right)$ be the running time to generate an interior point in $d$-dimensional atomic region $S_i$ with $n$ linear inequalities. Then, we have 
\begin{eqnarray*}
	T_{R}\left(d,\,k+1\right)&=&2r_d(k)\cdot T_V\left(d,\, k\right)+T_{R}\left(d-1,\,k\right)\\
	&=& 2r_d(k)\cdot T_V\left(d,\, k\right)+...+2r_1(k)\cdot T_V\left(1,\, k\right)\\
	&\leq&2d\cdot r_d(k)\cdot T_V\left(d,\, k\right),
\end{eqnarray*} 
and
$$\sum\limits_{k=1}^{n-1}T_{R}\left(d,\,k+1\right)\leq 2(n-1)d\cdot r_d(k)\cdot T_V\left(d,\, k\right).$$

Therefore, the running of the algorithm is $\bigO\left(nd\cdot r_d(n)\cdot T_V(d,\, n)\right).$


\end{proof}

Lemma~\ref{atomic_decomposition_integer-thm} shows that a list of
linear inequalities partition space $\mathbb{R}^d$ into all of the atomic regions with one integer point from each of them if exists.
\begin{lemma}\label{atomic_decomposition_integer-thm}
	There is a $\bigO\left(n\cdot r_d(n)\cdot T_I(d,\, n)\right)$ time algorithm such that given a list of
	$n$ linear inequalities in $\mathbb{R}^d$, it produces all of the atomic regions, and one integer point from each of them if exists, where $T_I(.)$ is the time to solve integer linear programming problem and $r_d(n)$ be the maximal number of regions that $n$
	hyper planes partition $\mathbb{R}^d$.
\end{lemma}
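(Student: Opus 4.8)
The plan is to follow the inductive scheme of the proof of Lemma~\ref{atomic_decomposition_thm} almost verbatim, maintaining after $k$ steps the list of atomic regions cut out by $L_1,\dots,L_k$, each tagged either with a lattice point it contains or with the symbol ``none''. Two ingredients of that proof are replaced. First, the volume computation used to detect whether a newly added hyperplane properly subdivides a region is replaced by a constant number of linear-programming queries: given a region $S_i$ with closure described by a system $Cx\le d$ and the hyperplane $a\cdot x=b$ of $L_{k+1}$, compute $\max\{a\cdot x : Cx\le d\}$ and $\min\{a\cdot x : Cx\le d\}$; then $L_{k+1}$ subdivides $S_i$ into two full-dimensional cells exactly when $b$ lies strictly between these two optima, and $S_i$ is left untouched otherwise. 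Second, the recursive descent through the dimensions that produced an interior point is replaced by a single call to the integer-linear-programming algorithm of Theorem~\ref{integer-linear-programming} on the region, which directly returns a lattice point of the region or certifies that none exists; this is why the bound loses the factor $d$ that appears in Lemma~\ref{atomic_decomposition_thm}. Throughout, the maintained invariant is that every region in the list is a full-dimensional atomic cell and its tag is correct.

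Before running this, I would normalise the inequalities so that Theorem~\ref{integer-linear-programming} applies. An input gate may be strict, $a\cdot x<b$, with integer coefficients. For the purpose of finding lattice points I replace $a\cdot x<b$ by $a\cdot x\le b-1$; its negation $a\cdot x\ge b$ is already non-strict; and the negation of a non-strict gate $a\cdot x\le b$ becomes $a\cdot x\ge b+1$ over the integers. By Lemma~\ref{integer-point-atomic_thm} this rewriting does not change which lattice points lie in any region formed from these inequalities, so each atomic region, for the lattice-point query, is an integral system $A'x\le b'$ to which Theorem~\ref{integer-linear-programming} applies. I keep the original (possibly strict) system for the subdivision test of the previous paragraph, since that test concerns the topological cell and must be run on the closure.

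The induction itself is then routine. In the base case $L_1$ splits $\mathbb{R}^d$ into two atomic cells and one ILP call in each produces a tag. In the inductive step, suppose $S_1,\dots,S_t$ are the atomic regions of $L_1,\dots,L_k$, each tagged; processing $L_{k+1}$, for each $S_i$ the LP test decides whether $L_{k+1}$ subdivides $S_i$. If not, keep $S_i$ with its tag. If so, form the two cells $S_{i,1},S_{i,2}$ by appending $L_{k+1}$, respectively its integer-normalised negation, and run one ILP in each to tag them. That the resulting list is exactly the set of atomic regions of $L_1,\dots,L_{k+1}$ follows by the same argument as in Lemma~\ref{atomic_decomposition_thm}. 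For the running time, the number of (region, stage) pairs handled is at most $\sum_{k=1}^{n-1} r_d(k)=\bigO(n\cdot r_d(n))$, and the total number of cells ever created and tagged is of the same order; each LP or ILP call involves at most $n$ constraints in the fixed dimension $d$, hence costs $\bigO(T_I(d,n))$, a linear program being no harder than the integer program solved by the algorithm of Theorem~\ref{integer-linear-programming}. This yields the claimed $\bigO(n\cdot r_d(n)\cdot T_I(d,n))$ bound.

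The step I expect to be most delicate is the bookkeeping around strict versus non-strict inequalities: the subdivision test must be run on the original (closed) system, every lattice-point query must be run on the integer-shifted closed system, and the four cases of ``negate a $\le$-gate'' versus ``negate a $<$-gate'' must be handled correctly when forming $S_{i,1}$ and $S_{i,2}$. Conflating the two systems would either drop genuine atomic regions (if the shifted system were used for the subdivision test) or report a phantom lattice point lying on a boundary hyperplane (if the open system were fed to the ILP oracle). One also has to check that a hyperplane that is redundant over $S_i$ is not spuriously counted as a split, which is precisely what the ``strictly between the two optima'' condition guarantees. Everything else is a direct transcription of the proof of Lemma~\ref{atomic_decomposition_thm}.
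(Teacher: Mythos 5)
Your proposal is correct and follows essentially the same route as the paper: incremental insertion of the hyperplanes $L_1,\dots,L_n$, maintaining the list of atomic cells, with one call to the fixed-dimension integer-programming algorithm of Theorem~\ref{integer-linear-programming} per candidate cell, yielding the $\bigO\left(n\cdot r_d(n)\cdot T_I(d,\,n)\right)$ bound. The paper's own proof is in fact slightly cruder---it omits your LP-based subdivision test and simply runs the ILP oracle on both $S_{i,1}$ and $S_{i,2}$ for every $S_i$, keeping only the cells in which a lattice point is found---so your additional care with the split test, the ``none'' tags, and the strict-versus-nonstrict normalisation (which the paper delegates implicitly to Lemma~\ref{integer-point-atomic_thm}) only tightens the same argument.
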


\begin{proof}
	It can be proven via an induction method. Let $L_1,\,
	L_2,\,\cdots,\, L_n$ be the linear inequalities. If there is only one linear inequality, say $L_1,$ $L_1$ partitions $\mathbb{R}^d$ into two atomic regions $S_1$ and $S_2,$ where $S_1$ is formed by $L_1$ and $S_2$ is formed by $\neg L_1,$ it is easy to generate two integer points $p_1\in S_1$ and $p_2\in S_2.$
	
	Assume that we have obtained atomic regions $S_1,\, S_2,\,\cdots,\, S_t$
	that are formed by $L_1,\, \cdots,\, L_k$, and each $S_i$ has an integer
	point $p_i\in S_i$ for $i=1,\,\cdots, t$. We need to show that we can obtian all of the
	atomic regions that are formed by $L_1,\,\cdots,\, L_k,\, L_{k+1}$ and one integer point from each of them if exists.
	
	For each $S_i$, assume that $S_i$ is formed by linear inequalities $L'_{1},\,\cdots,\, L'_{s}$ with $s\leq k.$ 
	
	Consider $S_{i,\,1}$ formed by adding $L_{k+1}$ to linear inequalities $L'_{1},\,\cdots,\, L'_{s}$ and $S_{i,\,2}$ formed by adding $\neg L_{k+1}$ to linear inequalities $L'_{1},\,\cdots,\, L'_{s}.$ We use Theorem~\ref{integer-linear-programming} to find if there is an integer point in $S_{i,\,1}$ and $S_{i,\,2}.$ If it exists, we keep the corresponding atomic region.

	Therefore, we have obtained atomic regions $S_1,\, S_2,\,\cdots,\, S_t,\, \cdots,\, S_h$ that are formed by $L_1,\,\cdots,\, L_k,\, L_{k+1}$ for some interger $h>t,$ and one integer point form each of them. 
	
	Define $T_{R}\left(d,\,n\right)$ be the running time to generate an integer point in $d$-dimensional atomic region $S_i$ with $n$ linear inequalities. Then, we have 
	\begin{eqnarray*}
		T_{R}\left(d,\,k+1\right)&=&2r_d(k)\cdot T_I\left(d,\, k\right)
	\end{eqnarray*} 
	and
	$$\sum\limits_{k=1}^{n-1}T_{R}\left(d,\,k+1\right)= 2(n-1)\cdot r_d(k)\cdot T_V\left(d,\, k\right).$$
	
	Therefore, the running of the algorithm is $\bigO\left(n\cdot r_d(n)\cdot T_I(d,\, n)\right).$
	
	
\end{proof}

Theorem~\ref{volume-basic-thm} shows that the algorithm can compute the volume of the region by the polyhedra circuits.
\begin{theorem}\label{volume-basic-thm}
There is a $\bigO\left(nd\cdot r_d(n)\cdot T_V(d,\, n)\right)$ time algorithm such that  given a polyhedra circuit in $\mathbb{R}^d$,
the algorithm computes the volume of the region determined by the input  polyhedra circuit, where $T_V(.)$ is the time to compute the volume of one rational
polyhedra, and $n$ is the number of linear inequalities in the input layer of the polyhedra circuit.
\end{theorem}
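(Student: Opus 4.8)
The plan is to reduce the volume computation for a polyhedra circuit to the atomic decomposition already supplied by Lemma~\ref{atomic_decomposition_thm}, combined with the structural fact of Lemma~\ref{circuit-union-lemma}. First I would run the algorithm of Lemma~\ref{atomic_decomposition_thm} on the $n$ linear inequalities occurring in the input layer of the circuit. This produces the list of all atomic regions $S_1,\dots,S_t$ (with $t\le r_d(n)$) induced by the corresponding hyper planes, together with one interior point $p_i$ in each $S_i$, in $\bigO\left(nd\cdot r_d(n)\cdot T_V(d,\, n)\right)$ time. Each $S_i$ comes described as a conjunction of at most $n$ (strict or non-strict) linear inequalities.

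Next I would decide, for each atomic region $S_i$, whether $S_i$ lies inside the region $R$ determined by the circuit. By Lemma~\ref{circuit-union-lemma}, $R$ is a disjoint union of atomic regions, hence $S_i\subseteq R$ if and only if $p_i\in R$. Since $p_i$ is an interior point of $S_i$, it satisfies each input inequality strictly in exactly one direction, so I can evaluate the circuit bottom-up at $p_i$: each input gate returns whether $p_i$ lies in its half space, and each internal gate takes the $\OR$ or $\AND$ of its inputs, according to whether it is a union or an intersection. The final value at the output gate decides $p_i\in R$, and this costs time linear in the size of the circuit per region.

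Then I would compute the volume of every atomic region $S_i$ with $p_i\in R$ using the polynomial-time algorithm of Theorem~\ref{Barvinok-volume-theo} and output the sum. Here Lemma~\ref{point-atomic_thm} guarantees that replacing strict inequalities by non-strict ones does not change the volume, so Barvinok's algorithm applied to the non-strict relaxation of $S_i$ is correct. Because the chosen atomic regions are pairwise disjoint and their union is exactly $R$, the sum of their volumes equals $\mathrm{vol}(R)$. For the running time, the decomposition step costs $\bigO\left(nd\cdot r_d(n)\cdot T_V(d,\, n)\right)$, the membership tests cost $\bigO(r_d(n))$ times the circuit size, and the at most $r_d(n)$ volume computations cost $\bigO\left(r_d(n)\cdot T_V(d,\, n)\right)$; the decomposition step dominates, yielding the claimed bound.

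The only real subtlety — and the step I would treat most carefully — is the justification that evaluating the circuit at a single interior point correctly decides membership of the entire atomic region. This is precisely what Lemma~\ref{circuit-union-lemma} provides: since unions and intersections of unions of atomic regions are again unions of atomic regions, the output region cannot cut through an atomic region, so one test point per region suffices. Everything else is bookkeeping on top of the two lemmas and Theorem~\ref{Barvinok-volume-theo}.
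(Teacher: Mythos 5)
Your proposal is correct and follows essentially the same route as the paper: decompose via Lemma~\ref{atomic_decomposition_thm}, use Lemma~\ref{circuit-union-lemma} to justify that one interior-point membership test per atomic region suffices, and sum the volumes of the accepted regions via Theorem~\ref{Barvinok-volume-theo}. Your write-up is in fact somewhat more explicit than the paper's (the gate-by-gate evaluation of the circuit at $p_i$ and the appeal to Lemma~\ref{point-atomic_thm} for strict inequalities are left implicit there), but the argument and the running-time accounting are the same.
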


\begin{proof}
	By Lemma~\ref{circuit-union-lemma} the region generated by a polyhedra circuits is a disjoint union of atomic regions, by Lemma~\ref{atomic_decomposition_thm} there is an interior point $p_i$ for each atomic region. For each interior point $p_i,$ we check if it is contained by polyhedra circuits, if it doese, then add the volume of the atomic regions which can be computed by Theorem~\ref{Barvinok-volume-theo}.
	
	And the running time of the algorithm is
	$$\bigO\left(nd\cdot r_d(n)\cdot T_V(d,\, n)\right)+r_d(n)T_V(d,\, n)$$ via Theorem~\ref{Barvinok-volume-theo}.
	Therefore the running time $\bigO\left(nd\cdot r_d(n)\cdot T_V(d,\, n)\right).$

\end{proof}

Theorem~\ref{lattice-basic-thm} shows that the algorithm can count the number of lattice points in the region by the polyhedra circuits.
\begin{theorem}\label{lattice-basic-thm}
 There is a $\bigO\left(n\cdot r_d(n)\cdot T_I(d,\, n)+r_d(n)T_L(d,\, n)\right)$ time algorithm such that given a polyhedra circuit in $\mathbb{R}^d$,
the algorithm counts the number of lattice points in the region determined by the input polyhedra circuit, where
$T_L(.)$ is the time to compute the number of lattice points of one rational
polyhedra and $T_I(.)$ is the time to solve integer linear programming problem and and $n$ is the number of linear inequalities in the input layer of the polyhedra circuit.
\end{theorem}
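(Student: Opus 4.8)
The proof would parallel that of Theorem~\ref{volume-basic-thm}, with the integer analogues of the subroutines substituted throughout. First I would invoke Lemma~\ref{circuit-union-lemma}: the region $R$ determined by the input polyhedra circuit is a disjoint union of atomic regions with respect to the $n$ hyper planes underlying the input gates. Hence each such atomic region is either entirely contained in $R$ or entirely disjoint from $R$, so it contributes either all of its lattice points or none to $|R\cap\mathbb{Z}^d|$, and $|R\cap\mathbb{Z}^d|$ is the sum of $|S_i\cap\mathbb{Z}^d|$ over the atomic regions $S_i\subseteq R$.

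Next I would run the algorithm of Lemma~\ref{atomic_decomposition_integer-thm} on the $n$ input linear inequalities; in time $\bigO\left(n\cdot r_d(n)\cdot T_I(d,\,n)\right)$ this enumerates all atomic regions and, for each atomic region that contains a lattice point, returns one such point $p_i$. Atomic regions containing no lattice point are discarded, since they contribute $0$. For each returned $p_i$, I would evaluate the polyhedra circuit at $p_i$ (feed $p_i$ into every input gate, then propagate the unions and intersections up to the output gate); this decides $p_i\in R$, and by the monochromaticity observation from the first paragraph this correctly decides whether the whole atomic region containing $p_i$ lies in $R$.

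Finally, for every atomic region $S_i$ that both contains a lattice point and lies in $R$, I would count $|S_i\cap\mathbb{Z}^d|$ as follows. The region $S_i$ is an intersection of at most $n$ half-spaces, some closed ($a^\top x\le b_j$) and some open ($a^\top x< b'_j$); I would replace each open constraint $a^\top x< b'_j$ by the closed constraint $a^\top x\le b'_j-1$, which is legitimate by Lemma~\ref{integer-point-atomic_thm} since all coefficients are integers, obtaining a rational polyhedron with exactly the same lattice points, and then apply Barvinok's algorithm (Theorem~\ref{Barvinok-lattice-point-theo}) in time $T_L(d,\,n)$. Summing these counts over all such $S_i$ gives $|R\cap\mathbb{Z}^d|$ by disjointness. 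For the running time: the decomposition costs $\bigO\left(n\cdot r_d(n)\cdot T_I(d,\,n)\right)$; there are at most $r_d(n)$ atomic regions, each requiring one circuit evaluation (a lower-order term, polynomial in the circuit size and $nd$) and at most one call to the lattice-point counter, contributing an additional $r_d(n)\,T_L(d,\,n)$; the total is $\bigO\left(n\cdot r_d(n)\cdot T_I(d,\,n)+r_d(n)T_L(d,\,n)\right)$, as claimed.

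The point that needs the most care — the main obstacle — is the correctness of testing a single lattice point per atomic region to decide membership in $R$: this rests on Lemma~\ref{circuit-union-lemma} guaranteeing that $R$ is a union of atomic regions with respect to exactly the same $n$ hyper planes used by the decomposition of Lemma~\ref{atomic_decomposition_integer-thm}, so that no atomic region straddles the boundary of $R$. A secondary subtlety is reconciling mixed strict and non-strict constraints with Barvinok's algorithm, which is stated for closed rational polyhedra; this is precisely what Lemma~\ref{integer-point-atomic_thm} resolves, since shifting each strict integer inequality by one preserves the lattice-point count.
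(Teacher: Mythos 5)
Your proposal is correct and follows essentially the same route as the paper's own proof: decompose via Lemma~\ref{circuit-union-lemma}, obtain one integer witness per atomic region by Lemma~\ref{atomic_decomposition_integer-thm}, test each witness against the circuit, and sum the Barvinok counts, yielding the stated $\bigO\left(n\cdot r_d(n)\cdot T_I(d,\,n)+r_d(n)T_L(d,\,n)\right)$ bound. Your explicit appeal to Lemma~\ref{integer-point-atomic_thm} to convert strict integer inequalities into closed ones before invoking Theorem~\ref{Barvinok-lattice-point-theo} is a detail the paper leaves implicit, but it is the same argument.
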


\begin{proof}
	By Lemma~\ref{circuit-union-lemma} the region generated by a polyhedra circuits is a disjoint union of atomic regions, and by Lemma~\ref{atomic_decomposition_integer-thm} there is an integer point $p_i$ for each atomic region. For each integer point $p_i,$ we check if it is contained by polyhedra circuits, if it doese, then add the number of lattice points in the atomic region which can be computed Theorem~\ref{Barvinok-lattice-point-theo}.
	
	And the running time of the algorithm is
	$$\bigO\left(n\cdot r_d(n)\cdot T_I(d,\, n)\right)+r_d(n)T_L(d,\, n)$$ via Theorem~\ref{Barvinok-volume-theo}.
	
	Therefore the running time $\bigO\left(n\cdot r_d(n)\cdot T_I(d,\, n)+r_d(n)T_L(d,\, n)\right).$
	
\end{proof}

An immediate application of the algorithm is to compute the volume of polyhedra
union, and count the number of lattice points in the union of
polyhedra.

\begin{corollary}\label{coro-volum}
    There is a $\bigO\left(nd\cdot r_d(n)\cdot T_V(d,\, n)\right)$ time algorithm to compute the volume of the union of polyhedra when give a list of polyhedra, where $T_V(.)$ is the time to compute the volume of one rational
    polyhedra.
\end{corollary}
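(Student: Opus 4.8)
The plan is to realize the union of a given list of polyhedra as a particular polyhedra circuit and then invoke Theorem~\ref{volume-basic-thm}. Suppose we are given rational polyhedra $P_1,\dots,P_m\subseteq\mathbb{R}^d$, where $P_j$ is defined by linear inequalities $L^{(j)}_1,\dots,L^{(j)}_{n_j}$, and let $n=\sum_{j=1}^m n_j$ be the total number of inequalities. First I would construct, in time linear in the size of the input, a depth-two polyhedra circuit in the sense of Definition~\ref{circuit-def}: the input layer consists of all $n$ half-space gates $L^{(j)}_i$; for each $j$ there is one intersection gate $G_j$ whose inputs are $L^{(j)}_1,\dots,L^{(j)}_{n_j}$, so that $G_j$ evaluates to $P_j$; and the unique output gate is a union gate taking $G_1,\dots,G_m$ as inputs. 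By construction the region determined by this circuit is exactly $\bigcup_{j=1}^m P_j$.

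Next I would feed this circuit to the algorithm of Theorem~\ref{volume-basic-thm}. That algorithm computes the volume of the region determined by an arbitrary polyhedra circuit with $n$ input linear inequalities in $\bigO\left(nd\cdot r_d(n)\cdot T_V(d,\,n)\right)$ time, and our circuit has exactly $n$ input inequalities, so the volume of $\bigcup_{j=1}^m P_j$ is returned within that bound. The cost of assembling the circuit is dominated by this, hence the overall running time is $\bigO\left(nd\cdot r_d(n)\cdot T_V(d,\,n)\right)$, as claimed.

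There is essentially no real obstacle: the only points to verify are that a finite union of polyhedra genuinely lies in the syntactic class of Definition~\ref{circuit-def} — which it does, since each polyhedron is an intersection of half-spaces and the whole object is a single union of such intersections, involving no negation — and that the parameter $n$ in the statement (total number of input linear inequalities) is the same quantity that appears in Theorem~\ref{volume-basic-thm}. Both are immediate, so the corollary follows directly from Theorem~\ref{volume-basic-thm}.
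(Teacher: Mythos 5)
Your construction of a depth-two circuit (one intersection gate per polyhedron, a single union output gate) followed by an application of Theorem~\ref{volume-basic-thm} is exactly the paper's own argument, which it states in one line with a figure in place of your explicit description. The proposal is correct and, if anything, more careful than the paper about verifying that the parameter $n$ matches; nothing further is needed.
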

\begin{proof}
	Let the output gate of the polyhedra circuitsis is union (See Fig.~\ref{fig4}). Then it follows Theorem~\ref{volume-basic-thm}.
	\begin{figure}[H]
		\centering
		\includegraphics[width=0.5\textwidth]{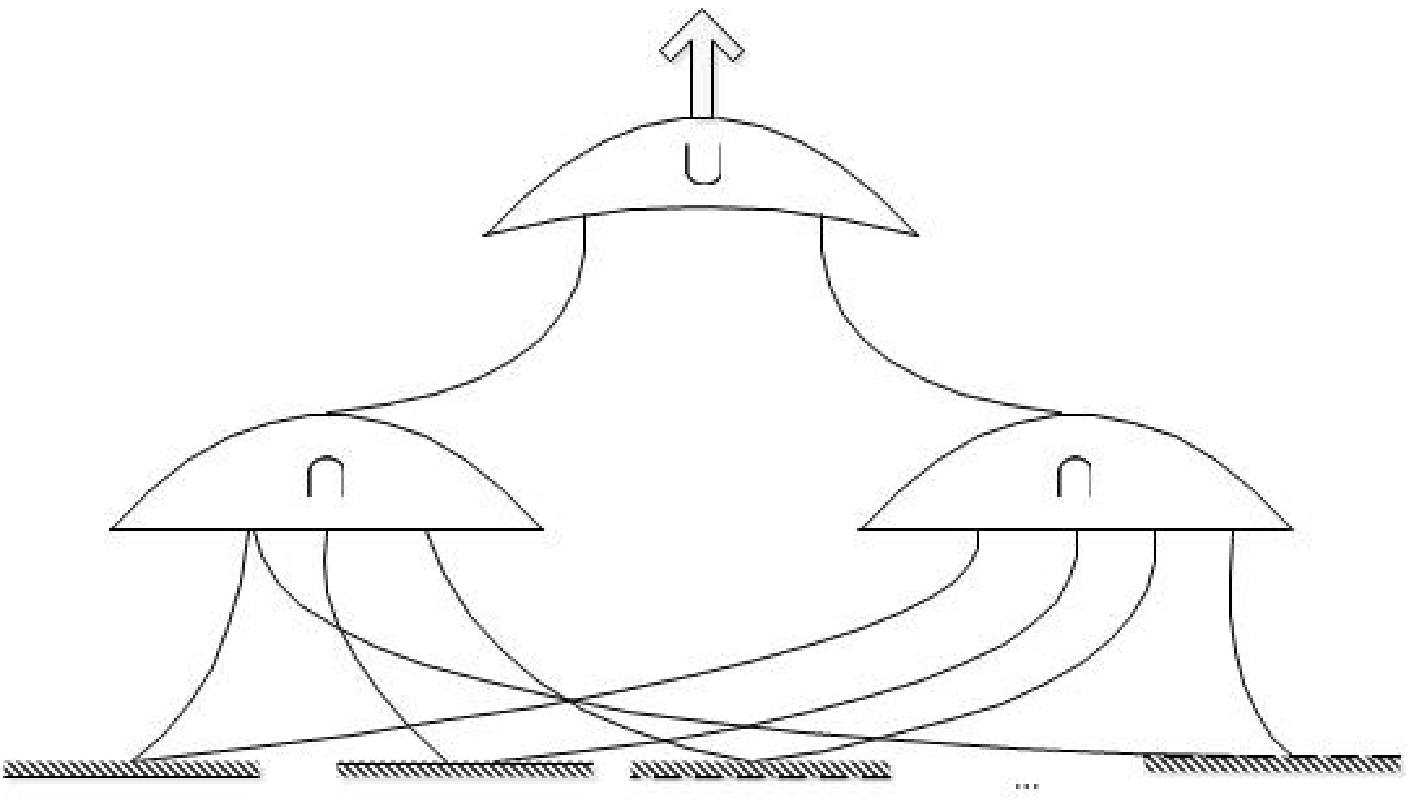} 
		\caption{}\label{fig4}
	\end{figure}
\end{proof}

\begin{corollary}\label{coro-lattice}
    There is a $\bigO\left(n\cdot r_d(n)\cdot T_I(d,\, n)+r_d(n)T_L(d,\, n)\right)$ time algorithm to count the number of lattice points in the union of polyhedra when give a list of polyhedra, where $T_L(.)$ is the time to compute the number of lattice points of one rational
    polyhedra and $T_I(.)$ is the time to find a integer point in one rational polyhedra.
\end{corollary}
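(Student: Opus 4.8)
The plan is to imitate the proof of Corollary~\ref{coro-volum}, substituting the lattice-point routine for the volume routine. Given a list of rational polyhedra $P_1,\ldots,P_m$ in $\mathbb{R}^d$, where each $P_i$ is described by a finite list of linear inequalities, I would build a depth-two polyhedra circuit: the input layer contains one gate for every linear inequality occurring among the $P_i$; the next layer contains $m$ intersection gates, the $i$-th of which takes as inputs precisely the inequalities defining $P_i$ and therefore computes $P_i$; and the output gate is a single union gate fed by all $m$ intersection gates, so the circuit computes $\bigcup_{i=1}^m P_i$ (this is the same shape as in Fig.~\ref{fig4}, now with the $m$ inner gates being intersections rather than a single one).

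Next I would apply Theorem~\ref{lattice-basic-thm} to this circuit. Let $n$ denote the number of linear inequalities in the input layer, i.e.\ the total number of inequalities over all the $P_i$. Theorem~\ref{lattice-basic-thm} then yields an algorithm that counts the number of lattice points in the region determined by the circuit, which is exactly $\left|\left(\bigcup_{i=1}^m P_i\right)\cap\mathbb{Z}^d\right|$, and runs in time $\bigO\left(n\cdot r_d(n)\cdot T_I(d,\, n)+r_d(n)T_L(d,\, n)\right)$. That is precisely the stated bound, so the corollary follows.

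I do not expect any real obstacle here: all of the content sits in Theorem~\ref{lattice-basic-thm} (and the lemmas supporting it), and the only thing left to check is that a union of polyhedra is the output region of a polyhedra circuit in the sense of Definition~\ref{circuit-def}. That is immediate, since a polyhedron is by definition an intersection of half-spaces and a union of such sets needs just one further union gate at the top. Hence the proof is a short reduction to the earlier theorem, exactly parallel to Corollary~\ref{coro-volum}.
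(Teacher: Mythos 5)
Your proposal is correct and takes essentially the same approach as the paper: the paper's own proof likewise forms a circuit whose output gate is a union over the intersection gates computing each polyhedron (its Fig.~\ref{fig5}) and then invokes Theorem~\ref{lattice-basic-thm}. You merely spell out the depth-two construction that the paper leaves implicit.
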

\begin{proof}
	Let the output gate of the polyhedra circuitsis is union (See Fig.~\ref{fig5}). Then it follows Theorem~\ref{lattice-basic-thm}.
	\begin{figure}[H]
		\centering
		\includegraphics[width=0.5\textwidth]{circuit_hyper_plane.eps}
		\caption{}\label{fig5}
	\end{figure}
\end{proof}

\begin{definition}\label{def20} 
   	Define $T_{U,\,L}(d,\,n)$ be the running time to count the number of lattice points in the union of rational polyhedra by Corollary~\ref{coro-lattice} and $T_{U,\,V}(d,\,n)$ be the running time to compute the volume of the union of rational polyhedra by Corollary~\ref{coro-volum} when given a list of polyhedra where $n$ is the total number of linear inequalities from input polyhedra and $d$ is number of variables of linear inequality.
\end{definition}

\section{Application in Continuous Polyhedra Maximum Coverage Problem}\label{maxi-cover-app}

In this section, we show how to apply the method at
Section~\ref{basic-sec} to the continuous polyhedra maximum coverage problem. Before present the algorithm, we give some definitions about the continuous polyhedra maximum coverage problem.

\begin{definition} {\it Continuous Maximum Coverage Problem:}
Given an integer $k$, a set of regions $S_1,
\,S_2,\,\cdots,\, S_m$, select $k$ regions $S_{i_1},\,\cdots,\, S_{i_k}$ such
that $S_{i_1}\cup S_{i_2}\cup\cdots\cup S_{i_k}$ has the
maximum volume of $S_{1}\cup S_{2}\cup\cdots\cup S_{m}$.
\end{definition}

\begin{definition} {\it Continuous Polyhedra Maximum Coverage Problem:}
	It is a continuous maximum coverage problem when $S_1,\,\cdots,\, S_m$ are all polyhedra.
\end{definition}

\begin{theorem}
There is a $\bigO\left(km\cdot T_{U,\,V}(d,\,n)\right)$ time approximation algorithm for
the continuous polyhedra maximum coverage problem with approximation ratio
$\left(1-{1\over e}\right)$, where $T_{U,\,V}(d,\,n)$ be the running time to compute the volume of the union of polyhedra with $n$ be the total number of linear inequalities from input polyhedra.
\end{theorem}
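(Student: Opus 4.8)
The plan is to recognize this statement as the classical greedy algorithm for maximizing a monotone submodular set function subject to a cardinality constraint. First I would set up the objective: for $T\subseteq\{1,\dots,m\}$ let $f(T)=\mathrm{vol}\!\left(\bigcup_{i\in T}S_i\right)$, and establish the two structural facts that drive the analysis. Monotonicity, $f(A)\le f(B)$ whenever $A\subseteq B$, is immediate since $\bigcup_{i\in A}S_i\subseteq\bigcup_{i\in B}S_i$. Submodularity, $f(A\cup\{j\})-f(A)\ge f(B\cup\{j\})-f(B)$ for $A\subseteq B$, follows because the ``new'' part contributed by $S_j$ is $S_j\setminus\bigcup_{i\in A}S_i$ in the first case and $S_j\setminus\bigcup_{i\in B}S_i$ in the second, and the latter is a subset of the former, hence has no larger volume. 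These are routine but should be stated with care because volumes are involved; in particular one notes that the instances are assumed bounded (so $f$ is finite) and that the strict-versus-nonstrict half-space distinction in Definition~\ref{circuit-def} is irrelevant for volume, which is exactly Lemma~\ref{point-atomic_thm}.

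Next I would describe the greedy algorithm: maintain a selected index set $T$, initially empty; for $k$ rounds, for each not-yet-selected index $i$ form the union circuit for $T\cup\{i\}$ and compute $f(T\cup\{i\})$ using the union-volume algorithm of Corollary~\ref{coro-volum}, then add to $T$ the index $i^*$ achieving the maximum; finally output $T$ together with the associated union. Because every candidate union here is a union of polyhedra drawn from the original list, its description uses at most $n$ of the input linear inequalities, so each volume evaluation costs $T_{U,\,V}(d,\,n)$ (Definition~\ref{def20}); there are at most $km$ such evaluations, which gives the time bound $\bigO\!\left(km\cdot T_{U,\,V}(d,\,n)\right)$ after absorbing lower-order bookkeeping.

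For the approximation ratio I would reproduce the standard telescoping argument. Let $\mathrm{OPT}$ be an optimal $k$-subset and write $T_j$ for the greedy set after $j$ rounds. Monotonicity and submodularity give $f(\mathrm{OPT})\le f(\mathrm{OPT}\cup T_j)\le f(T_j)+\sum_{i\in\mathrm{OPT}}\bigl(f(T_j\cup\{i\})-f(T_j)\bigr)$, and since the greedy choice maximizes the marginal gain, the element added in round $j+1$ increases $f$ by at least $\frac1k\bigl(f(\mathrm{OPT})-f(T_j)\bigr)$. Unrolling this recursion yields $f(T_k)\ge\bigl(1-(1-1/k)^k\bigr)f(\mathrm{OPT})\ge(1-1/e)f(\mathrm{OPT})$, which is the claimed ratio since $f(\mathrm{OPT})$ is by definition the maximum volume attainable with $k$ regions.

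The main obstacle is not the algorithm or its analysis, both of which are standard once the right language is used, but the careful handling of the geometry: ensuring $f$ is well defined and finite on the instances under consideration (boundedness of the union of the input polyhedra), confirming via Lemma~\ref{point-atomic_thm} that open and closed faces do not change any volume so that the union-volume subroutine returns the value we need, and verifying that the union circuit built for each subset $T\cup\{i\}$ indeed has input size bounded by $n$ so that the per-call cost is genuinely $T_{U,\,V}(d,\,n)$.
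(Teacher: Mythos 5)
Your proposal is correct and follows essentially the same route as the paper: the greedy algorithm that repeatedly selects the region maximizing the uncovered volume, evaluated via the union-volume subroutine of Corollary~\ref{coro-volum}, followed by the standard recursion $V(t+1)\ge V(t)+\frac{OPT-V(t)}{k}$ unrolled to $\left(1-(1-1/k)^k\right)OPT\ge\left(1-\frac{1}{e}\right)OPT$, with the same $\bigO\left(km\cdot T_{U,\,V}(d,\,n)\right)$ accounting. The only difference is presentational: you justify the key marginal-gain inequality explicitly via monotonicity and submodularity of $T\mapsto vol\left(\bigcup_{i\in T}S_i\right)$, whereas the paper asserts it directly; this is a welcome tightening rather than a different argument.
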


\begin{proof}
It follows from greedy algorithm and Theorem~\ref{volume-basic-thm}. Let $vol(P)$ denotes the volume of region $P$ and let $V$ denotes $vol\left(S_{1}\cup\cdots \cup S_{m}\right).$ The greedy method works by picking, at each stage, the algorithm select region S that covers the maximum volume that are uncovered.
Assume that $t$ regions have been selected, say, $S_{i_1},\, S_{i_2},\,...,\,S_{i_t},$ the greedy approach needs to select $S_j$ such that $vol\left(S_j- \left(S_{i_1}\cup S_{i_2}\cup...\cup S_{i_t}\right)\right)$ is maximal for the next choice.

Since 
\begin{eqnarray*}
	vol\left(S_j- \left(S_{i_1}\cup S_{i_2}\cup...\cup S_{i_t}\right)\right)&=&vol\left(S_j\cup S_{i_1}\cup S_{i_2}\cup...\cup S_{i_t}\right)\\
	&& -vol\left(S_{i_1}\cup S_{i_2}\cup...\cup S_{i_t}\right),
\end{eqnarray*} 
and we can compute $vol\left(S_j\cup S_{i_1}\cup S_{i_2}\cup...\cup S_{i_t}\right)$ and $vol\left(S_{i_1}\cup S_{i_2}\cup...\cup S_{i_t}\right)$ via Theorem~\ref{volume-basic-thm} and Corollary~\ref{coro-volum}, then at each stage, we just need to traversal $m$ polyhedra to find $S_j.$

Let $V(l)$ be the volume of coverage for $l$ regions, and let $OPT$ be the maximum volume of union of a optimal solution $S_{i_1},\, S_{i_2},\,...,\,S_{i_k}$ for the continuous polyhedra maximum coverage problem. 

First, we show that 
$$V(t)\geq \left(1-\left(1-\frac{1}{k}\right)^t\right)OPT$$
by using induction method.

It is trivial at $t=1,$ since $V(1)\geq \frac{OPT}{k}.$ Assume $V(t)\geq \left(1-\left(1-\frac{1}{k}\right)^t\right)OPT.$ Consider the case $t+1.$
\begin{eqnarray*}
	V(t+1)&=&V(t)+vol\left(S_{i_{t+1}}- \left(S_{i_1}\cup S_{i_2}\cup...\cup S_{i_t}\right)\right)\\
	&\geq& V(t)+\frac{OPT-V(t)}{k}\\
	&=&\left(1-\frac{1}{k}\right)V(t)+\frac{OPT}{k}\\
	&\geq&\left(1-\left(1-\frac{1}{k}\right)^t\right)\left(1-\frac{1}{k}\right)OPT+\frac{OPT}{k}\\
	&=&\left(1-\left(1-\frac{1}{k}\right)^{t+1}\right)OPT.
\end{eqnarray*} 

Therefore,
$$V(t)\geq \left(1-\left(1-\frac{1}{k}\right)^t\right)OPT.$$

Since $\left(1-\frac{1}{k}\right)^k$ is increasing and $\frac{1}{e}=\left(1-\frac{1}{k}\right)^k+\Omega\left(\frac{1}{k}\right),$ therefore,
$$V(k)\geq \left(1-\left(1-\frac{1}{k}\right)^k\right)OPT\geq \left(1-\frac{1}{e}\right)OPT.$$

There are $k$ stages. At each stage, the algorithm needs to check $m$ polyhedra in order to select the next region that covers the maximum volume that are uncovered. By Definition~\ref{def20}, $T_{U,\,V}(d,\,n)$ is the running time to compute the volume of the union of polyhedra, therefore the running time of the algorithm is $\bigO\left(km\cdot T_{U,\,V}(d,\,n)\right).$

\end{proof}

\section{Application in Polyhedra Maximum Lattice Coverage Problem}\label{maxi--lattice-cover-app}

In this section, we show how to apply the method at
Section~\ref{basic-sec} to the polyhedra maximum lattice coverage problem. Before present the algorithm, we give some definitions about polyhedra maximum lattice coverage problem.

\begin{definition} {\it Maximum Lattice Coverage Problem:}
	Given an integer $k$, a set of regions $S_1,\,
	S_2,\,\cdots,\, S_m$, select $k$ regions $S_{i_1},\,\cdots,\, S_{i_k}$ such that $S_{i_1}\cup S_{i_2}\cup\cdots\cup S_{i_k}$ has the
	maximum number of lattice points of $S_{1}\cup S_{2}\cup\cdots\cup S_{m}$.
\end{definition}

\begin{definition} {\it Polyhedra Maximum Lattice Coverage Problem:}
	It is a maximum lattice coverage problem when $S_1,\,\cdots,\, S_m$ are all polyhedra.
\end{definition}

\begin{theorem}
	There is a $\bigO\left(km\cdot T_{U,\,L}(d,\,n)\right)$ time approximation algorithm for
	the polyhedra maximum lattice coverage problem with approximation ratio
	$\left(1-{1\over e}\right)$, where $T_{U,\,L}(d,\,n)$ is the running time to count the number of lattice points in the union of polyhedra with $n$ be the total number of linear inequalities from input polyhedra.
\end{theorem}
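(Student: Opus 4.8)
The plan is to transcribe the argument already used for the continuous polyhedra maximum coverage problem, replacing the volume measure by the counting measure on $\mathbb{Z}^d$. First I would describe the greedy algorithm: it maintains a current family of selected polyhedra, and at each of the $k$ stages it scans all $m$ input polyhedra $S_1,\, S_2,\,\cdots,\, S_m$ and selects the one that maximizes the number of lattice points in its union with the current family; equivalently, after choices $S_{i_1},\,\cdots,\, S_{i_t}$ have been made, it picks $S_j$ maximizing $\bigl|(S_j\cap\mathbb{Z}^d)\setminus(S_{i_1}\cup\cdots\cup S_{i_t})\bigr|$. The key point is that this marginal gain equals
$$\bigl|\bigl(S_j\cup S_{i_1}\cup\cdots\cup S_{i_t}\bigr)\cap\mathbb{Z}^d\bigr|-\bigl|\bigl(S_{i_1}\cup\cdots\cup S_{i_t}\bigr)\cap\mathbb{Z}^d\bigr|,$$
a difference of two lattice-point counts of unions of polyhedra, each of which is computable by Corollary~\ref{coro-lattice} (which rests on Theorem~\ref{lattice-basic-thm}); so each stage costs $\bigO\bigl(m\cdot T_{U,\,L}(d,\,n)\bigr)$ and the whole algorithm $\bigO\bigl(km\cdot T_{U,\,L}(d,\,n)\bigr)$.

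Next I would carry out the standard submodular-coverage analysis, which is identical in form to the one already given for the volume case. Let $L(P)$ denote the number of lattice points in a region $P$, let $OPT$ be the maximum of $L\bigl(S_{i_1}\cup\cdots\cup S_{i_k}\bigr)$ over all size-$k$ subfamilies, and let $V(t)$ be the coverage value achieved after $t$ greedy steps. Because the lattice-point count is the counting measure on $\mathbb{Z}^d$, it is monotone and submodular, so among the $k$ polyhedra of an optimal solution at least one contributes at least a $1/k$ fraction of the part of $OPT$ still uncovered; hence the greedy increment at step $t+1$ satisfies $V(t+1)\ge V(t)+\tfrac{OPT-V(t)}{k}$. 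The same induction as before then gives
$$V(t)\ge\Bigl(1-\bigl(1-\tfrac1k\bigr)^t\Bigr)OPT,$$
and evaluating at $t=k$, together with $\bigl(1-\tfrac1k\bigr)^k\le\tfrac1e$, yields $V(k)\ge\bigl(1-\tfrac1e\bigr)OPT$.

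Finally I would tally the running time: there are $k$ stages, and each stage performs $m$ evaluations of $L$ on a union of at most $t+1\le k+1$ polyhedra whose defining inequalities total $\bigO(n)$, so by Definition~\ref{def20} each such evaluation costs $T_{U,\,L}(d,\,n)$ and the algorithm runs in $\bigO\bigl(km\cdot T_{U,\,L}(d,\,n)\bigr)$ overall. The only mild subtlety---and the closest thing to an obstacle---is that Corollary~\ref{coro-lattice} counts lattice points of unions of polyhedra, not of set differences; but we never actually need the latter, since the identity $|A\setminus B|=|A\cup B|-|B|$ holds for the counting measure exactly as for the volume measure, so every quantity the algorithm computes stays inside the class of objects Corollary~\ref{coro-lattice} handles. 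Apart from this, the proof is a verbatim adaptation of the volume argument.
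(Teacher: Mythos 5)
Your proposal is correct and follows essentially the same route as the paper: the same greedy algorithm, the same identity $L(A\setminus B)=L(A\cup B)-L(B)$ to reduce marginal gains to lattice-point counts of unions computable via Corollary~\ref{coro-lattice}, the same induction giving $V(t)\ge\bigl(1-(1-\tfrac1k)^t\bigr)OPT$, and the same $\bigO\left(km\cdot T_{U,\,L}(d,\,n)\right)$ time tally. Your explicit remark that the set-difference counts never need to be computed directly is a point the paper leaves implicit, but it does not change the argument.
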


\begin{proof}
	It follows from greedy algorithm and Theorem~\ref{lattice-basic-thm}. Let $L(P)$ denotes the volume of region $P$ and let $V$ denotes $L\left(S_{1}\cup\cdots \cup S_{m}\right).$ The greedy method works by picking, at each stage, the algorithm select region S that covers the maximum volume that are uncovered.
	Assume that $t$ regions have been selected, say, $S_{i_1},\, S_{i_2},\,...,\,S_{i_t},$ the greedy approach needs to select $S_j$ such that $L\left(S_j- \left(S_{i_1}\cup S_{i_2}\cup...\cup S_{i_t}\right)\right)$ is maximal for the next choice.
	
	Since 
	\begin{eqnarray*}
		L\left(S_j- \left(S_{i_1}\cup S_{i_2}\cup...\cup S_{i_t}\right)\right)&=&L\left(S_j\cup S_{i_1}\cup S_{i_2}\cup...\cup S_{i_t}\right)\\
		&& -L\left(S_{i_1}\cup S_{i_2}\cup...\cup S_{i_t}\right),
	\end{eqnarray*} 
	and we can compute $L\left(S_j\cup S_{i_1}\cup S_{i_2}\cup...\cup S_{i_t}\right)$ and $L\left(S_{i_1}\cup S_{i_2}\cup...\cup S_{i_t}\right)$ via Theorem~\ref{lattice-basic-thm} and Corollary~\ref{coro-lattice}, then at each stage, we just need to traversal $m$ polyhedra to find $S_j.$
	
	Let $V(l)$ be the number of lattice points of coverage for $l$ regions, and let $OPT$ be the maximum number of lattice points of union of a optimal solution $S_{i_1},\, S_{i_2},\,...,\,S_{i_k}$ for the polyhedra maximum lattice coverage problem. 
	
	First, we show that 
	$$V(t)\geq \left(1-\left(1-\frac{1}{k}\right)^t\right)OPT$$
	by using induction method.
	
	It is trivial at $t=1,$ since $V(1)\geq \frac{OPT}{k}.$ Assume $V(t)\geq \left(1-\left(1-\frac{1}{k}\right)^t\right)OPT.$ Consider the case $t+1.$
	\begin{eqnarray*}
		V(t+1)&=&V(t)+L\left(S_{i_{t+1}}- \left(S_{i_1}\cup S_{i_2}\cup...\cup S_{i_t}\right)\right)\\
		&\geq& V(t)+\frac{OPT-V(t)}{k}\\
		&=&\left(1-\frac{1}{k}\right)V(t)+\frac{OPT}{k}\\
		&\geq&\left(1-\left(1-\frac{1}{k}\right)^t\right)\left(1-\frac{1}{k}\right)OPT+\frac{OPT}{k}\\
		&=&\left(1-\left(1-\frac{1}{k}\right)^{t+1}\right)OPT.
	\end{eqnarray*} 
	
	Therefore,
	$$V(t)\geq \left(1-\left(1-\frac{1}{k}\right)^t\right)OPT.$$

	Since $\left(1-\frac{1}{k}\right)^k$ is increasing and $\frac{1}{e}=\left(1-\frac{1}{k}\right)^k+\Omega\left(\frac{1}{k}\right),$ therefore,
	$$V(k)\geq \left(1-\left(1-\frac{1}{k}\right)^k\right)OPT\geq \left(1-\frac{1}{e}\right)OPT.$$
	
	There are $k$ stages. At each stage, the algorithm needs to check $m$ polyhedra in order to select the next region that covers the maximum number of lattice points that are uncovered. By Definition~\ref{def20}, $T_{U,\,L}(d,\,n)$ is the running time to compute the number of lattice points of the union of polyhedra, therefore the running time of the algorithm is $\bigO\left(km\cdot T_{U,\,L}(d,\,n)\right).$
	
\end{proof}

\section{Application in Polyhedra $\left(1-\beta\right)$-Lattice Set Cover Problem}\label{set_cover_sec}

In this section, we show how to apply the method at
Section~\ref{basic-sec} to the polyhedra $\left(1-\beta\right)$-lattice set cover problem. Before present the algorithm, we give some definitions about polyhedra $\left(1-\beta\right)$-lattice set cover problem.

\begin{definition} {\it $\left(1-\beta\right)$-Lattice Set Cover Problem:}
	For a real $\beta\in [0,\, 1),$ and a set of regions $S_1,\, S_2,\,\cdots,\,
	S_m$, select $k$ regions $S_{i_1},\,\cdots,\, S_{i_k}$ such that
	$L(S_{i_1}\cup S_{i_2}\cup\cdots\cup S_{i_k})\ge (1-\beta)L(S_1\cup \cdots \cup S_m)$ with $L(P)$ denotes the number of lattice points of region $P$.
\end{definition}

\begin{definition} {\it Polyhedra $\left(1-\beta\right)$-Lattice Set Cover Problem:}
	It is a lattice set cover problem when $S_1,\,\cdots,\, S_m$ are all
	polyhedra.
\end{definition}


\begin{theorem}
	For reals $\alpha\in (0, 1)$ and $\beta\in [0,\, 1),$ there is a $\bigO\left(m^2\cdot T_{U,\,L}(d,\,n)\right)$ time approximation algorithm for the polyhedra
	$\left(1-\beta\right)$-lattice set cover problem with the output $k$ regions $S_{i_1},\,...,\,S_{i_k}$ satisfying  $L\left(S_{i_1}\cup\cdots\cup S_{i_k}\right)\geq \left(1-\alpha\right)\left(1-\beta\right)L\left(S_{1}\cup\cdots\cup S_{m}\right)$ and $k\le \left(1+\ln\frac{1}{\alpha}\right)H,$
	where $H$ is the number of sets in an optimal solution for the polyhedra
	$\left(1-\beta\right)$-lattice set cover problem, and $T_{U,\,L}(d,\,n)$ is the running time to count the number of lattice points in the union of polyhedra with $n$ be the total number of linear inequalities from input polyhedra.
\end{theorem}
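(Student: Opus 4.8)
The plan is to run the standard greedy partial set cover algorithm, using Theorem~\ref{lattice-basic-thm} together with Corollary~\ref{coro-lattice} as the oracle that, given a finite list of polyhedra, returns the number of lattice points in their union (a finite union of polyhedra is the output of a polyhedra circuit, so these calls cost $\bigO(T_{U,\,L}(d,\,n))$ each). First I would compute the target value $N := (1-\beta)\cdot L(S_1\cup\cdots\cup S_m)$ with a single call to Corollary~\ref{coro-lattice}. Then the algorithm runs in rounds: it maintains the list of already chosen polyhedra with current union $U_t$ and covered count $V(t):=L(U_t)$ (with $V(0)=0$), and in round $t+1$ it scans all $m$ input polyhedra, for each $S_j$ computing the marginal gain $L(U_t\cup S_j)-V(t)$ via Theorem~\ref{lattice-basic-thm}, picks an $S_j$ of maximum marginal gain, and stops as soon as $V(t)\ge(1-\alpha)N$. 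Because $L(S_1\cup\cdots\cup S_m)\ge N>(1-\alpha)N$, every round before the stopping round makes strictly positive progress using a not-yet-chosen set, so the process halts with $k\le m$ chosen regions, and the stopping rule gives the coverage guarantee $L(S_{i_1}\cup\cdots\cup S_{i_k})=V(k)\ge(1-\alpha)N=(1-\alpha)(1-\beta)L(S_1\cup\cdots\cup S_m)$ directly.

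It remains to bound $k$. Let $S_{j_1},\dots,S_{j_H}$ be an optimal solution, so $L(S_{j_1}\cup\cdots\cup S_{j_H})\ge N$. Consider the start of round $t+1$, when $V(t)<(1-\alpha)N\le N$. At least $N-V(t)$ of the lattice points of $S_{j_1}\cup\cdots\cup S_{j_H}$ are not in $U_t$, and since they lie in $H$ sets some $S_{j_\ell}$ contains at least $(N-V(t))/H$ of them; hence the greedy marginal gain satisfies $V(t+1)-V(t)\ge(N-V(t))/H$, which rearranges to $N-V(t+1)\le(1-\tfrac1H)(N-V(t))$. By induction $N-V(t)\le(1-\tfrac1H)^tN\le e^{-t/H}N$, so $V(t)\ge(1-\alpha)N$ holds once $e^{-t/H}\le\alpha$, i.e. once $t\ge H\ln\tfrac1\alpha$. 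Therefore the algorithm stops with $k\le\lceil H\ln\tfrac1\alpha\rceil\le H\ln\tfrac1\alpha+1\le(1+\ln\tfrac1\alpha)H$, using $H\ge1$.

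For the running time, computing $N$ is one call costing $\bigO(T_{U,\,L}(d,\,n))$, and each of the at most $k\le m$ rounds makes $\bigO(m)$ oracle calls (one marginal-gain evaluation per candidate, reusing $V(t)$), each costing $\bigO(T_{U,\,L}(d,\,n))$, for a total of $\bigO(m^2\cdot T_{U,\,L}(d,\,n))$; the factor $1+\ln\tfrac1\alpha$ is a constant and is absorbed, and anyway $k\le m$. The step I expect to need the most care is the averaging argument in the second paragraph, namely that the best greedy marginal gain is at least a $1/H$ fraction of the residual gap $N-V(t)$ measured against the optimal coverage (not against the full union $L(S_1\cup\cdots\cup S_m)$); the geometric-decay induction and the passage from $H\ln\tfrac1\alpha$ to $(1+\ln\tfrac1\alpha)H$ are then routine. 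One should also state explicitly that every union-of-polyhedra lattice count used by the algorithm is legitimately computable within the claimed time by Corollary~\ref{coro-lattice}, since that is what licenses treating $L(\cdot)$ as an $\bigO(T_{U,\,L}(d,\,n))$-time oracle throughout.
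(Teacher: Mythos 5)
Your proposal is correct and follows essentially the same route as the paper: the same greedy algorithm using Corollary~\ref{coro-lattice} as the lattice-counting oracle, the same $1/H$ averaging step against the optimal solution, the same geometric-decay induction (your residual form $N-V(t)\le\left(1-\tfrac1H\right)^tN$ is algebraically identical to the paper's $V(t)\ge\left(1-\left(1-\tfrac1H\right)^t\right)(1-\beta)V$), and the same derivation of $k\le\left(1+\ln\tfrac{1}{\alpha}\right)H$ and of the $\bigO\left(m^2\cdot T_{U,\,L}(d,\,n)\right)$ running time. If anything, you justify the pigeonhole step behind the marginal-gain inequality more explicitly than the paper does.
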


\begin{proof}
	It follows from greedy algorithm and Theorem~\ref{volume-basic-thm}. Let $L(P)$ denotes the number of lattice points of region $P$ and let $V$ denotes $L\left(S_{1}\cup\cdots \cup S_{m}\right).$ The greedy method works by picking, at each stage, the algorithm select region S that covers the maximum number of lattice points that are uncovered.
	Assume that $t$ regions have been selected, say, $S_{i_1},\, S_{i_2},\,...,\,S_{i_t},$ the greedy approach needs to select $S_j$ such that $L\left(S_j- \left(S_{i_1}\cup S_{i_2}\cup...\cup S_{i_t}\right)\right)$ is maximal for the next choice. The algorithm stops if $L\left(S_{i_1}\cup\cdots\cup S_{i_k}\right)\geq \left(1-\alpha\right)\left(1-\beta\right)L\left(S_{1}\cup\cdots\cup S_{m}\right).$
	
	Since 
	\begin{eqnarray*}
		L\left(S_j- \left(S_{i_1}\cup S_{i_2}\cup...\cup S_{i_t}\right)\right)&=&L\left(S_j\cup S_{i_1}\cup S_{i_2}\cup...\cup S_{i_t}\right)\\
		&& -L\left(S_{i_1}\cup S_{i_2}\cup...\cup S_{i_t}\right),
	\end{eqnarray*} 
	and we can compute $L\left(S_j\cup S_{i_1}\cup S_{i_2}\cup...\cup S_{i_t}\right)$ and $L\left(S_{i_1}\cup S_{i_2}\cup...\cup S_{i_t}\right)$ via Theorem~\ref{lattice-basic-thm} and Corollary~\ref{coro-lattice}, then at each stage, we just need to traversal $m$ polyhedra to find $S_j.$
	
	First, we show that 
	$$L\left(S_{i_1}\cup\cdots\cup S_{i_t}\right)\geq \left(1-\left(1-\frac{1}{H}\right)^{t}\right)\left(1-\beta\right)V$$
	by using induction method.
	
	It is trivial at $t=1,$ since $L\left(S_{i_1}\right)\geq \frac{\left(1-\beta\right)V}{H}=\left(1-\left(1-\frac{1}{H}\right)\right)\left(1-\beta\right)V.$ Assume $L\left(S_{i_1}\cup\cdots\cup S_{i_t}\right)\geq \left(1-\left(1-\frac{1}{H}\right)^{t}\right)\left(1-\beta\right)V.$ Consider the case $t+1.$
	\begin{eqnarray*}
		L\left(S_{i_1}\cup\cdots S_{i_t}\cup S_{i_{t+1}}\right)&=&L\left(S_{i_1}\cup\cdots\cup S_{i_t}\right)\\
		&&+L\left(S_{i_{t+1}}- \left(S_{i_1}\cup S_{i_2}\cup...\cup S_{i_t}\right)\right)\\
		&\geq& L\left(S_{i_1}\cup\cdots\cup S_{i_t}\right)+\frac{\left(1-\beta\right)V-L\left(S_{i_1}\cup\cdots\cup S_{i_t}\right)}{H}\\
		&=&\left(1-\frac{1}{H}\right)L\left(S_{i_1}\cup\cdots\cup S_{i_t}\right)+\frac{\left(1-\beta\right)V}{H}\\
		&\geq&\left(1-\left(1-\frac{1}{H}\right)^t\right)\left(1-\frac{1}{H}\right)\left(1-\beta\right)V+\frac{\left(1-\beta\right)V}{H}\\
		&=&\left(1-\left(1-\frac{1}{H}\right)^{t+1}\right)\left(1-\beta\right)V.
	\end{eqnarray*} 
	
	Therefore,
	\begin{equation}\label{eq22}
	L\left(S_{i_1}\cup\cdots\cup S_{i_t}\right)\geq \left(1-\left(1-\frac{1}{H}\right)^{t}\right)\left(1-\beta\right)V.
	\end{equation}
	
	For each $k$ satisfying 
	\begin{eqnarray}\label{eq33}
	&&\left(1-\frac{1}{H}\right)^k\leq\alpha
	\end{eqnarray} 
	then it will satisfy inequality~(\ref{eq22}).
	
	Since 
	\begin{eqnarray*}
		&&\left(1-\frac{1}{H}\right)^k=\left(1-\frac{1}{H}\right)^{H\cdot \frac{k}{H}}<\left(\frac{1}{e}\right)^{\frac{k}{H}},
	\end{eqnarray*} 
	then we have $k$ as 
	$$k\geq H\cdot \ln\frac{1}{\alpha}$$
	that will satisfy the inequality~(\ref{eq33}), which means it will also satisfy inequality~(\ref{eq22}). We can let
	$$k= \ceiling{H\cdot\ln \frac{1}{\alpha}}\le H\cdot\ln \frac{1}{\alpha}+1\le H\left(1+\ln \frac{1}{\alpha}\right).$$
	
	Therefore, 
	$$k\le \left(1+\ln\frac{1}{\alpha}\right)H.$$
	
	The algorithm has to traversal $m$ polyhedra. At each stage, the algorithm needs to check $m$ polyhedra in order to select the next region that covers the maximum number of lattice points that are uncovered. By Definition~\ref{def20}, the running time to compute the number of lattice points of the union of polyhedra is $\bigO\left(T_{U,\,L}(d,\,n)\right),$ therefore, the running time of the algorithm is $\bigO\left(m^2\cdot T_{U,\,L}(d,\,n)\right).$
\end{proof}

\section{Application in $\left(1-\beta\right)$-Continuous Polyhedra Set Cover}\label{conti-set-cover-app}
In this section, we show how to apply the method at
Section~\ref{basic-sec} to $\left(1-\beta\right)$-continuous polyhedra set cover problem. Before present the algorithm, we give some definitions about $\left(1-\beta\right)$-continuous polyhedra set cover problem.

\begin{definition} {\it $\left(1-\beta\right)$-Continuous Set Cover Problem:}
	For a real $\beta\in [0,\, 1),$ and a set of regions $S_1,\,
	S_2,\,\cdots,\, S_m$, select $k$ regions $S_{i_1},\cdots, S_{i_k}$ such
	that $vol(S_{i_1}\cup S_{i_2}\cup\cdots\cup S_{i_k})\ge (1-\beta)vol(S_1\cup \cdots \cup S_m)$ with $vol(P)$ denotes the volume of region $P$.
\end{definition}

\begin{definition} {\it $\left(1-\beta\right)$-Continuous Polyhedra Set Cover Problem:}
	It is a $\left(1-\beta\right)$-continuous set cover problem when $ S_1,\,\cdots,\, S_m$ are all polyhedra.
\end{definition}

\begin{theorem}
	For reals $\alpha\in (0, 1)$ and $\beta\in [0,\, 1),$ there is a $\bigO\left(m^2\cdot T_{U,\,V}(d,\,n)\right)$ time approximation algorithm for the
	$\left(1-\beta\right)$-continuous polyhedra set cover problem with the output $k$ regions $S_{i_1},\,...,\,S_{i_k}$ satisfying  $vol\left(S_{i_1}\cup\cdots\cup S_{i_k}\right)\geq \left(1-\alpha\right)\left(1-\beta\right)vol\left(S_{1}\cup\cdots\cup S_{m}\right)$ and $k\le \left(1+\ln\frac{1}{\alpha}\right)H,$ where $H$ is the number of sets in an optimal solution for the $(1-\beta)$-continuous set cover problem, and $T_{U,\,V}(d,\,n)$ is the running time to compute the volume of the union of polyhedra with $n$ be the total number of linear inequalities from input polyhedra.
\end{theorem}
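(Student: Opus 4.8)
The plan is to reprise, almost verbatim, the argument used for the polyhedra $\left(1-\beta\right)$-lattice set cover problem, replacing every occurrence of ``number of lattice points'' by ``volume'' and invoking Theorem~\ref{volume-basic-thm} and Corollary~\ref{coro-volum} in place of Theorem~\ref{lattice-basic-thm} and Corollary~\ref{coro-lattice}. Write $vol(P)$ for the volume of a region $P$ and put $V=vol\left(S_1\cup\cdots\cup S_m\right)$. First I would describe the greedy algorithm: starting from the empty collection, repeatedly pick the region $S_j$ maximizing the incremental volume $vol\left(S_j-\left(S_{i_1}\cup\cdots\cup S_{i_t}\right)\right)$, which by inclusion--exclusion equals $vol\left(S_j\cup S_{i_1}\cup\cdots\cup S_{i_t}\right)-vol\left(S_{i_1}\cup\cdots\cup S_{i_t}\right)$; both terms are unions of polyhedra, hence computable by Corollary~\ref{coro-volum} (a union of polyhedra is the output of a polyhedra circuit whose top gate is a union). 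The algorithm halts as soon as $vol\left(S_{i_1}\cup\cdots\cup S_{i_k}\right)\geq\left(1-\alpha\right)\left(1-\beta\right)V$.

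Next I would establish the key invariant
$$vol\left(S_{i_1}\cup\cdots\cup S_{i_t}\right)\geq\left(1-\left(1-\frac{1}{H}\right)^{t}\right)\left(1-\beta\right)V$$
by induction on $t$. For the base case $t=1$, among the $H$ regions of an optimal $\left(1-\beta\right)$-cover one has volume at least $\left(1-\beta\right)V/H$, so the first greedy pick has incremental volume at least that. For the inductive step, the union of those $H$ optimal regions has volume at least $\left(1-\beta\right)V$, so its difference with the currently covered region has volume at least $\left(1-\beta\right)V-vol\left(S_{i_1}\cup\cdots\cup S_{i_t}\right)$, and hence some single one of them — a fortiori the greedy choice — adds at least $\left(\left(1-\beta\right)V-vol\left(S_{i_1}\cup\cdots\cup S_{i_t}\right)\right)/H$; the algebra $\left(1-\frac{1}{H}\right)x+\frac{\left(1-\beta\right)V}{H}$ then advances the exponent from $t$ to $t+1$, exactly as in the lattice version.

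From $\left(1-\frac{1}{H}\right)^{k}<e^{-k/H}$, the invariant yields $vol\left(S_{i_1}\cup\cdots\cup S_{i_k}\right)\geq\left(1-\alpha\right)\left(1-\beta\right)V$ as soon as $e^{-k/H}\leq\alpha$, i.e.\ $k\geq H\ln\frac{1}{\alpha}$, so taking $k=\ceiling{H\ln\frac{1}{\alpha}}\leq H\ln\frac{1}{\alpha}+1\leq\left(1+\ln\frac{1}{\alpha}\right)H$ suffices, which also bounds the number of selected regions. For the running time, the greedy loop runs at most $m$ iterations (there are only $m$ candidate regions, and $H\leq m$), each iteration scans all $m$ candidates and performs a constant number of volume-of-union evaluations, each costing $T_{U,\,V}(d,\,n)$ by Definition~\ref{def20}; the total is $\bigO\left(m^2\cdot T_{U,\,V}(d,\,n)\right)$.

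I do not expect a genuine obstacle here, since the structure is identical to the preceding set-cover theorem; the only point requiring care is the termination analysis — one must verify that the threshold $\left(1-\alpha\right)\left(1-\beta\right)V$ is actually reached within $\ceiling{H\ln\frac{1}{\alpha}}$ steps (immediate from the invariant evaluated at that value of $t$) and that the loop never needs more than $m$ steps (since $H\leq m$). The correctness of computing volumes of unions through polyhedra circuits is already guaranteed by Lemma~\ref{circuit-union-lemma}, Lemma~\ref{point-atomic_thm}, and Theorem~\ref{volume-basic-thm}.
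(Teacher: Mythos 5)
Your proposal is correct and follows essentially the same route as the paper's own proof: the same greedy selection via volume-of-union differences computed by Theorem~\ref{volume-basic-thm} and Corollary~\ref{coro-volum}, the same inductive invariant $vol\left(S_{i_1}\cup\cdots\cup S_{i_t}\right)\geq\left(1-\left(1-\frac{1}{H}\right)^{t}\right)\left(1-\beta\right)V$, the same bound $k\le\left(1+\ln\frac{1}{\alpha}\right)H$, and the same $\bigO\left(m^2\cdot T_{U,\,V}(d,\,n)\right)$ running-time accounting. In fact your write-up is slightly more careful than the paper's in justifying the greedy step (averaging over the $H$ optimal regions) and the termination within $m$ iterations.
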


\begin{proof}
	It follows from greedy algorithm and Theorem~\ref{volume-basic-thm}. Let $vol(P)$ denotes the volume of region $P$ and let $V$ denotes $vol\left(S_{1}\cup\cdots \cup S_{m}\right).$ The greedy method works by picking, at each stage, the algorithm select region S that covers the maximum volume that are uncovered.
	Assume that $t$ regions have been selected, say, $S_{i_1},\, S_{i_2},\,...,\,S_{i_t},$ the greedy approach needs to select $S_j$ such that $vol\left(S_j- \left(S_{i_1}\cup S_{i_2}\cup...\cup S_{i_t}\right)\right)$ is maximal for the next choice. The algorithm stops if $vol\left(S_{i_1}\cup\cdots\cup S_{i_k}\right)\geq \left(1-\alpha\right)\left(1-\beta\right)vol\left(S_{1}\cup\cdots\cup S_{m}\right).$
	
	Since 
	\begin{eqnarray*}
		vol\left(S_j- \left(S_{i_1}\cup S_{i_2}\cup...\cup S_{i_t}\right)\right)&=&vol\left(S_j\cup S_{i_1}\cup S_{i_2}\cup...\cup S_{i_t}\right)\\
		&& -vol\left(S_{i_1}\cup S_{i_2}\cup...\cup S_{i_t}\right),
	\end{eqnarray*} 
	and we can compute $vol\left(S_j\cup S_{i_1}\cup S_{i_2}\cup...\cup S_{i_t}\right)$ and $vol\left(S_{i_1}\cup S_{i_2}\cup...\cup S_{i_t}\right)$ via Theorem~\ref{volume-basic-thm} and Corollary~\ref{coro-volum}, then at each stage, we just need to traversal $m$ polyhedra to find $S_j.$
	
	First, we show that 
	$$vol\left(S_{i_1}\cup\cdots \cup S_{i_t}\right)\geq \left(1-\left(1-\frac{1}{H}\right)^{t}\right)\left(1-\beta\right)V$$
	by using induction method.
	
	It is trivial at $t=1,$ since $vol\left(S_{i_1}\right)\geq \frac{\left(1-\beta\right)V}{H}=\left(1-\left(1-\frac{1}{H}\right)\right)\left(1-\beta\right)V.$ Assume $vol\left(S_{i_1}\cup\cdots\cup S_{i_t}\right)\geq \left(1-\left(1-\frac{1}{H}\right)^{t}\right)\left(1-\beta\right)V.$ Consider the case $t+1.$
	\begin{eqnarray*}
		vol\left(S_{i_1}\cup\cdots S_{i_t}\cup S_{i_{t+1}}\right)&=&vol\left(S_{i_1}\cup\cdots \cup S_{i_t}\right)\\
		&&+vol\left(S_{i_{t+1}}- \left(S_{i_1}\cup S_{i_2}\cup...\cup S_{i_t}\right)\right)\\
		&\geq& vol\left(S_{i_1}\cup\cdots\cup S_{i_t}\right)+\frac{\left(1-\beta\right)V-vol\left(S_{i_1}\cup\cdots\cup S_{i_t}\right)}{H}\\
		&=&\left(1-\frac{1}{H}\right)vol\left(S_{i_1}\cup\cdots\cup S_{i_t}\right)+\frac{\left(1-\beta\right)V}{H}\\
		&\geq&\left(1-\left(1-\frac{1}{H}\right)^t\right)\left(1-\frac{1}{H}\right)\left(1-\beta\right)V+\frac{\left(1-\beta\right)V}{H}\\
		&=&\left(1-\left(1-\frac{1}{H}\right)^{t+1}\right)\left(1-\beta\right)V.
	\end{eqnarray*} 
	
	Therefore,
	\begin{equation}\label{eq2}
		vol\left(S_{i_1}\cup\cdots\cup S_{i_t}\right)\geq \left(1-\left(1-\frac{1}{H}\right)^{t}\right)\left(1-\beta\right)V.
	\end{equation}
	
	For each $k$ satisfying 
	\begin{eqnarray}\label{eq3}
		&&\left(1-\frac{1}{H}\right)^k\leq\alpha
	\end{eqnarray} 
	then it will satisfy inequality~(\ref{eq2}).
	
	Since 
	\begin{eqnarray*}
		&&\left(1-\frac{1}{H}\right)^k=\left(1-\frac{1}{H}\right)^{H\cdot \frac{k}{H}}<\left(\frac{1}{e}\right)^{\frac{k}{H}},
	\end{eqnarray*} 
	then we have $k$ as 
	$$k\geq H\cdot \ln\frac{1}{\alpha}$$
	that will satisfy the inequality~(\ref{eq3}), which means it will also satisfy inequality~(\ref{eq2}). We can let
	$$k= \ceiling{H\cdot\ln \frac{1}{\alpha}}\le H\cdot\ln \frac{1}{\alpha}+1\le H\left(1+\ln \frac{1}{\alpha}\right).$$
	
	Therefore, 
	$$k\le \left(1+\ln\frac{1}{\alpha}\right)H.$$
	
	The algorithm has to traversal $m$ polyhedra. At each stage, the algorithm needs to check $m$ polyhedra in order to select the next region that covers the maximum volume that are uncovered. By Definition~\ref{def20}, the running time to compute the volume of the union of polyhedra is $\bigO\left(T_{U,\,V}(d,\,n)\right),$ therefore, the running time of the algorithm is $\bigO\left(m^2\cdot T_{U,\,V}(d,\,n)\right).$

\end{proof}

\section{NP-hardness and Inapproximation}\label{NP-hardness}
In this section, we show the NP-hardness of the maximum coverage
problem and set cover problem when each set is represented as union
of polyhedrons.

\begin{definition}The maximum region coverage problem is that given 
a list  $A_1,\cdots, A_m$ of regions in $\mathbb{R}^d$, and an integer $k\ge 1$. Find $k$ regions with the largest volume for their union.
\end{definition}

\begin{definition}
	Let $A_1,\cdots, A_k$ be a list axis parallel rectangles. Let $A$ be
	the region to be union  $A_1\cup\cdots\cup A_k$. Then $A$ is called
	an {\it axis parallel rectangle union region}.
\end{definition}

\begin{definition}
The maximum region coverage problem with each set to be axis
parallel rectangle union region is called {\it axis parallel
rectangle union  maximum region coverage problem}.
\end{definition}

\begin{definition}
The lattice set cover  problem with each set to be all lattice points in  axis parallel
rectangle union region is called {\it lattice axis parallel
rectangle union region maximum coverage problem}.
\end{definition}

\begin{theorem}\label{inapp-thm}The following statements are true:
\begin{enumerate}
\item
The axis parallel rectangle union  maximum region coverage problem is
NP-hard. Furthermore, if there is a polynomial time
$c$-approximation for axis parallel rectangle union maximum region
coverage problem, then there is a polynomial time $c$-approximation
for classical maximum coverage problem.
\item
The axis parallel rectangle union set problem is NP-hard.
Furthermore, if there is a polynomial time $c$-approximation for
lattice axis parallel rectangle union region set cover problem, then
there is a polynomial time $c$-approximation for classical set cover
problem.
\end{enumerate}
\end{theorem}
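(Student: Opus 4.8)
The plan is to prove both parts by polynomial-time, value-preserving reductions: part~1 from the classical maximum coverage problem and part~2 from the classical set cover problem, both of which are well known to be NP-hard. In each case the reduction sends a feasible subcollection of the combinatorial instance to a subcollection of the geometric instance with \emph{exactly} the same objective value, and conversely; this immediately gives NP-hardness, and it also gives the stated transfer of $c$-approximations, since one simply pulls a $c$-approximate geometric solution back to a $c$-approximate combinatorial one.

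For part~1, given a classical maximum coverage instance with ground set $\{u_1,\dots,u_n\}$, subsets $T_1,\dots,T_m$, and parameter $k$, I would assign to each element $u_j$ the unit hypercube $R_j=[j,\,j+1)\times[0,1)^{d-1}\subseteq\mathbb{R}^d$ (for $d=1$ this is just the interval $[j,j+1)$), and to each subset $T_i$ the region $A_i=\bigcup_{j:\,u_j\in T_i}R_j$, which is by construction an axis parallel rectangle union region described by a list of at most $n$ rectangles. Since $R_1,\dots,R_n$ are pairwise disjoint and each has volume $1$, for every choice of indices we have $vol\left(A_{i_1}\cup\cdots\cup A_{i_k}\right)=\left|T_{i_1}\cup\cdots\cup T_{i_k}\right|$. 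Hence the instance $(A_1,\dots,A_m,k)$ of the axis parallel rectangle union maximum region coverage problem has the same optimum as the original, and feasible solutions correspond bijectively with equal value; a polynomial time $c$-approximation for the geometric problem thus yields one for classical maximum coverage, and NP-hardness transfers.

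For part~2, I would use the analogous reduction from classical set cover: map $u_j$ to the isolated lattice point $(j,0,\dots,0)\in\mathbb{Z}^d$ enclosed in the small axis parallel rectangle $Q_j=[j-\tfrac14,\,j+\tfrac14]\times[-\tfrac14,\tfrac14]^{d-1}$, which contains that lattice point and no other, and map $T_i$ to $A_i=\bigcup_{j:\,u_j\in T_i}Q_j$. Then the lattice points inside any union $A_{i_1}\cup\cdots\cup A_{i_k}$ are exactly $\{(j,0,\dots,0):u_j\in T_{i_1}\cup\cdots\cup T_{i_k}\}$, so a subcollection of the $A_i$'s covers all lattice points of $A_1\cup\cdots\cup A_m$ if and only if the corresponding subcollection of the $T_i$'s covers $\{u_1,\dots,u_n\}$, with matching cardinalities. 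Therefore a polynomial time $c$-approximation for the lattice axis parallel rectangle union region set cover problem gives a polynomial time $c$-approximation for classical set cover, and NP-hardness follows.

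The reductions are elementary and I do not expect a genuine obstacle; the only points requiring care are (i) verifying that the gadget families $\{R_j\}$ and $\{Q_j\}$ really are disjoint unit-volume cubes, respectively rectangles each isolating a single lattice point, so that volume (respectively lattice-point count) of a union equals precisely the number of covered elements, and (ii) observing that each $A_i$ is written as a list of $O(n)$ rectangles, so the whole construction is computable in polynomial time. With these checks, both parts of the theorem reduce to the known NP-hardness of classical maximum coverage and classical set cover, and the approximation-preservation needed for the ``furthermore'' clauses is exactly the bijection-with-equal-objective property established above.
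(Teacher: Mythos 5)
Your proof is correct and follows essentially the same reduction as the paper: represent each ground-set element by its own disjoint axis-parallel box and each combinatorial set by the union of the boxes of its elements, so that volume (resp.\ lattice-point count) of a union equals the number of covered elements and both NP-hardness and the approximation ratio transfer. If anything, your treatment of the lattice set cover case is slightly more careful than the paper's, which reuses the area argument, whereas you explicitly build boxes each isolating a single lattice point.
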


\begin{proof}
Let $S_1,\cdots, S_m$ be an input for a (classical) maximum coverage problem or set cover problem.
Let $U$ be $S_1\cup S_2\cup\cdots\cup S_m$. We construct a axis
parallel rectangle union  maximum region coverage problem. Without
loss of generality, we assume $U=\{1,2,\cdots, n\}$. For each $S_i$,
we construct $A_i$, which is an axis parallel rectangle region. Let
$D$ represent the line segment of length $2n$ from $(0, {1\over 2})$
to $(2n-1,{1\over 2})$. Let $B_k$ be the $1\times 1$ square with
left bottom corner at $(2(i-1), 0)$. Let $A_i=D\cup (\cup_{j\in S_i}
B_i)$. It is easy to see that if $|S_{i_1}\cup \cdots \cup
S_{i_k}|=h$ if and only if the area of $A_{i_1}\cup \cdots\cup
A_{i_k}$ is $h$.

If $S_{i_1}\cup \cdots \cup S_{i_k}=S_1\cup \cdots \cup S_m$ if and
only if the area of $A_{i_1}\cup \cdots\cup A_{i_k}$ is $n$.
It is easy to see to this reduction reserves the ratio of approximation for both maximum coverage and set cover problems.
\end{proof}

Using reasonable hypothesis of complexity theory, 
Feige~\cite{Feige98} showed that $1-{1\over e}$ is the best possible polynomial time approximation ratio for the maximum coverage problem, and $\ln n$ is the best possible polynomial time approximation ratio for the set cover problem, where $n$ is the number of elements in the universe set for the classical set cover. Our Theorem~\ref{inapp-thm} shows the connection of inapproximation between the classical problems and their corresponding geometric versions.

\section{Conclusions}\label{concl}
We introduce polyhedra circuits. Each polyhedra circuit characterizes a geometric region in $\mathbb{R}^d$. They can be applied to represent a rich class of geometric objects, which include all polyhedra and the union of a finite number of polyhedra.  They can be used to approximate a large class of $d$-dimensional manifolds in $\mathbb{R}^d$. Define $T_V(d,\, n)$ be the polynomial time in $n$ to compute the volume of one rational polyhedra, $T_L(d,\, n)$ be the polynomial time in $n$ to count the number of lattice points in one rational polyhedra with $d$ be a fixed dimensional number, $T_I(d,\, n)$ be the polynomial time in $n$ to solve integer linear programming time when the fixed dimensional number $d$ is small, where $n$ is the total number of linear inequalities from input polyhedra. We develop algorithms to count the number of lattice points in the geometric region determined by a polyhedra circuit in $\bigO\left(nd\cdot r_d(n)\cdot T_V(d,\, n)\right)$ time and to compute the volume of the geometric region determined by a polyhedra circuit in $\bigO\left(n\cdot r_d(n)\cdot T_I(d,\, n)+r_d(n)T_L(d,\, n)\right)$ time, where $n$ is the number of input linear inequalities, $d$ is number of variables and $r_d(n)$ be the maximal number of regions that $n$ linear inequalities with $d$ variables partition $\mathbb{R}^d$.Define $T_{U,\,L}(d,\,n)$ be the running time to count the number of lattice points in the union of polyhedra and $T_{U,\,V}(d,\,n)$ be the running time to compute the volume of the union of polyhedra when given a list of polyhedra where $n$ is the total number of linear inequalities from input polyhedra and $d$ is number of varibales.
We applied the algorithms to continuous polyhedra maximum coverage problem and polyhedra maximum lattice coverage problem. The algorithm for continuous polyhedra maximum coverage problem gives $\left(1-\frac{1}{e}\right)$-approximation in $\bigO\left(km\cdot T_{U,\,V}(d,\,n)\right)$ running time and the algorithm for polyhedra maximum lattice coverage problem gives $\left(1-\frac{1}{e}\right)$-approximation in $\bigO\left(km\cdot T_{U,\,L}(d,\,n)\right)$ running time.
We also applied the algorithms to polyhedra $\left(1-\beta\right)$-lattice set cover problem and $\left(1-\beta\right)$-continuous polyhedra set cover problem. The algorithm for polyhedra $\left(1-\beta\right)$-lattice set cover problem
such that the optimal $k$ regions cover $(1-\alpha)(1-\beta)$ of the number of lattice points $V$ of region $S_1\cup\cdots \cup S_m$ with $k\le \left(1+\ln\frac{1}{\alpha}\right)H$ in $\bigO\left(m^2\cdot T_{U,\,L}(d,\,n)\right)$ time,
where $H$ is the number of sets in an optimal solution for the polyhedra
$\left(1-\beta\right)$-lattice set cover problem, and 
the algorithm for $\left(1-\beta\right)$-continuous polyhedra set cover problem such that the optimal $k$ regions cover $(1-\alpha)(1-\beta)$ of the volume $V$ of region $S_1\cup\cdots \cup S_m$ with $k\le \left(1+\ln\frac{1}{\alpha}\right)H$ in $\bigO\left(m^2\cdot T_{U,\,V}(d,\,n)\right)$ running time,
where $H$ is the number of sets in an optimal solution for the $(1-\beta)$-continuous set cover problem. We also showed the NP-hardness of the continous maximum coverage problem and set cover problem when each set is represented as union of polyhedra.







\end{document}